\documentclass[a4paper,11pt]{article} %%{proc}
\usepackage{latexsym, amsmath, amssymb}
\usepackage{lscape}
\usepackage{epstopdf,caption,subcaption,graphicx,xcolor}
\usepackage{tikz}
\usepackage{hyperref}
\usepackage{comment}
\usetikzlibrary{shapes,arrows,fit,calc,positioning}
\usetikzlibrary{decorations.pathreplacing}

\normalsize
\newtheorem{theorem}{Theorem}
\newtheorem{lemma}{Lemma}

\newtheorem{definition}{Definition}

\newtheorem{problem}{Problem}
\newenvironment{proof}{{\sc Proof. }}{\hfill$\Box$\vspace{0.1in}}
\newcommand{\mc}[1]{{\cal {#1}}}

\begin{document}

\title{Offline green bin packing and its constrained variant}

\author{Mingyang Gong\footnote{Gianforte School of Computing, Montana State University, Bozeman, MT 59717, USA.
Email: {\tt mingyang.gong@montana.edu}}
\and
Brendan Mumey\footnote{Gianforte School of Computing, Montana State University, Bozeman, MT 59717, USA.
Email: {\tt brendan.mumey@montana.edu}}}

\date{}
\maketitle

%============================================================================== 
\begin{abstract}
In this paper, we study the {\em green bin packing} (GBP) problem where $\beta \ge 0$ and $G \in [0, 1]$ are two given values as part of the input.
The energy consumed by a bin is $\max \{0, \beta (x-G) \}$ where $x$ is the total size of the items packed into the bin.
The GBP aims to pack all items into a set of unit-capacity bins so that 
the number of bins used plus the total energy consumption is minimized.
When $\beta = 0$ or $G = 1$, GBP is reduced to the classic bin packing (BP) problem.
In the {\em constrained green bin packing} (CGBP) problem, the objective is to minimize the number of bins used to pack all items 
while the total energy consumption does not exceed a given upper bound $U$.
We present an APTAS and a $\frac 32$-approximation algorithm for both GBP and CGBP,
where the ratio $\frac 32$ matches the lower bound of BP.
\\\\
\textbf{Keywords:} Green bin packing; constrained green bin packing; approximation scheme; offline algorithms
\end{abstract}
%============================================================================

\newpage 

\section{Introduction}\label{sec:intro}
%==========================================================================

In the classic {\em bin packing}~\cite{GJ79} (BP) problem, we are given a set of items where the size of item $i$ is denoted by $s_i \in (0, 1]$
and an infinity number of unit-capacity bins.
The item $i$ can be placed into a bin only when the remaining capacity of the bin is at least $s_i$.
The objective of BP is to minimize the number of bins used to pack all items.
Recently, the authors~\cite{BSS25} proposed a new variant of BP, called {\em green bin packing} (GBP),
in which each used bin incurs a certain amount of energy consumption.
In detail, suppose $G \in [0, 1]$ and the capacity within $[0, G]$ is defined as the {\em green space},
which does not consume any energy.
However, using each unit of the capacity outside the green space will incur energy consumption $\beta \ge 0$.
Therefore, if the total size of the items packed into a bin is $x$, then the energy consumed by the bin is $f_{\beta, G}(x) = \max \{ 0, \beta (x-G) \}$ as shown in Definition~\ref{def01}.
The GBP aims to pack all items into bins so that the number of opened bins plus the total energy consumption is minimized.
One clearly sees that GBP is reduced to BP when $\beta = 0$ or $G = 1$
and thus, GBP must be NP-hard.
As mentioned in~\cite{BSS25}, GBP is motivated by practical scenarios in cloud computing.
One such scenario is {\em greening geographical load balancing}~\cite{LLW11, LLW12}, 
in which each data center has a limited capacity (i.e., the quantity $G$ in GBP) of carbon-free energy
but additional energy is required when the demand exceeds the capacity.

If all sizes as well as $G$ and $\beta$ are given, then both BP and GBP are {\em offline}.
By contrast, in the {\em online} setting, items arrive one by one and each item should be packed into an existing or newly opened bin upon its arrival before the next item comes.
In~\cite{BSS25}, the authors mainly studied the online GBP from the perspective of online algorithms.
Next, we introduce the corresponding results for online BP and GBP even though this paper focuses on the offline setting.

\subsection{Online BP and GBP}

Given an online minimization problem, we let $OPT(I)$ denote the objective value of an optimal offline solution for an instance $I$.
Similarly, we let $A(I)$ denote the objective value of the solution produced by a specific online algorithm $A$ for $I$.
If for every instance $I$, we have $A(I) \le \alpha \cdot OPT(I)$, then we say $A$ is {\em $\alpha$-competitive},
where the value $\alpha$ is also called the {\em absolute competitive ratio}.
On the other hand, if there exists another value $\gamma \ge 0$, independent of the instance $I$, such that $A(I) \le \alpha \cdot OPT(I)+\gamma$, then we say $A$ is {\em asymptotic $\alpha$-competitive} and 
$\alpha$ (respectively, $\gamma$) is the {\em asymptotic competitive ratio} (respectively, {\em additive term}) for $A$.
In other words, $A$ is asymptotic $\alpha$-competitive if $A(I) \le \alpha \cdot OPT(I)$ when $OPT(I)$ tends to infinity.

On one hand, the authors in~\cite{BBG12} showed that no algorithm for online BP can achieve an absolute competitive ratio smaller than $\frac {248}{161} \approx 1.54037$.
On the other hand, online BP has been extensively studied in the literature and many well-known algorithms, for example, {\em Next Fit}, {\em First Fit} and {\em Harmonic}, are presented for it.
Next Fit~\cite{Joh73} places each item into the most recently opened bin if possible; otherwise, it creates a new bin for the item.
The absolute competitive ratio of Next Fit is $2$ and the ratio is tight.
If each item is packed into the earliest opened bin in which it fits, then Next Fit reduces to First Fit~\cite{Joh73}, another fundamental algorithm for BP.
In~\cite{Ull71}, the author showed that First Fit is asymptotic $1.7$-competitive with an additive term $3$.
The ratio $1.7$ is asymptotic tight since there exists an instance $I$ such that $OPT(I) = 10$ and the items in $I$ can be packed into $17$ bins by First Fit~\cite{Ull71}.
The additive term $3$ experienced several improvements.
Johnson et al. improved the additive term from $3$ to $2$~\cite{JDU74}, which was further improved to $0.9$ in~\cite{GGJ76} and to $0.7$ in~\cite{XT10}.
Finally, D\'osa and Sgall~\cite{DS13} proved that for every instance $I$, First Fit uses at most $\lfloor 1.7 \cdot OPT(I) \rfloor$ bins to pack all items in $I$,
which settled the additive term in the competitive analysis of First Fit.
Given a parameter $H$, the Harmonic algorithm first partitions the interval $(0, 1]$ into $H$ sub-intervals $(0, \frac 1H]$, $(\frac 1H, \frac 1{H-1}]$, $\ldots$, $(\frac 13, \frac 12]$ and $(\frac 12, 1]$.
Each item $i$ is placed into an existing bin when the bin only contains items whose sizes belong to the same sub-interval as $i$;
if no such bin exists, then a new bin is created to pack the item $i$.
When $H$ tends to a sufficient large value, the Harmonic algorithm achieves an absolute competitive ratio of $1.69103$~\cite{LL85}.

For online GBP, the authors in~\cite{BSS25} distinguished two cases $\beta G \le 1$ and $\beta G > 1$.
Roughly speaking, when $\beta G \le 1$, the value of $\beta$ is relatively small and 
thus, it is reasonable to use fewer bins while the capacity outside the green space in these bins can be widely used.
In this case, GBP is close to the classic BP and therefore, the algorithms for online BP also perform well for online GBP.
For example, the absolute competitive ratio of Next Fit is $\frac {2+\beta(1-G)}{\beta+(1-\beta G)}$~\cite{BSS25}, 
which equals to $2$ when $\beta = 0$ or $G = 1$, matching the result for online BP.
On the contrary, when $\beta G > 1$, it costs more to occupy non-green space in a bin.
Therefore, we generally prefer to open a new bin, which is different from online BP.
In~\cite{BSS25}, the authors introduced another parameter $\tau \in [0, 1-G]$ and 
each item $i$ with size $s_i \ge G+\tau$ is placed to a single bin separately.
Afterwards, we can reset the capacity of each bin to $G+\tau$ and consequently, the algorithms for online BP can be applied to online GBP too.
For instance, Next Fit achieves an absolute competitive ratio of $\max \{ \frac {G(1+\tau \beta)}{G+\tau}, \frac {G(2+\tau \beta)}{G+2\tau}, \frac {2+\tau \beta}{1+\tau \beta} \}$
and the authors in~\cite{BSS25} also determine the optimal value of $\tau$ that minimizes the ratio.
Interested readers may refer to~\cite{BSS25} for more algorithmic results for online GBP.

\subsection{Offline GBP and its constrained variant}

Given an offline minimization problem, we say that an offline algorithm $A$ is {\em $\alpha$-approximation} if for every instance $I$, we have $A(I) \le \alpha \cdot OPT(I)$
where $A(I)$ and $OPT(I)$ denote the objective values of the solution produced by $A$ and an optimal solution, respectively.
The value $\alpha$ is referred to as the {\em absolute approximation ratio}.
Similarly to online algorithms, if $A(I) \le \alpha \cdot OPT(I)+\gamma$, then we say $A$ is {\em asymptotic $\alpha$-approximation},
where $\alpha$ is called the {\em asymptotic approximation ratio} and $\gamma \ge 0$ is the {\em additive term}.
An {\em asymptotic polynomial time approximation scheme} ({\em APTAS}) is a family of offline approximation algorithms
such that given a fixed $0 < \epsilon \le 1$, it contains an approximation algorithm $A$ that runs in polynomial time in the input size and $A(I) \le (1+\epsilon)OPT(I)+\gamma$ for some $\gamma \ge 0$.
If the additive term $\gamma$ is $0$ in the definition of APTAS, then APTAS becomes a {\em polynomial time approximation scheme} ({\em PTAS}).

For offline BP, no polynomial-time algorithm can have an absolute approximation ratio smaller than $\frac 32$ unless NP=P~\cite{GJ79}.
Simchi-Levi~\cite{Sim94} presented the famous {\em First Fit Decreasing} ({\em FFD)} algorithm
that first sorts the items in the non-increasing order of their sizes and then applies the First Fit algorithm to pack the items.
Luckily, FFD is a $\frac 32$-approximation algorithm, matching the lower bound of offline BP.
The authors in~\cite{FL81} developed an APTAS for offline BP.

In the remainder of this paper, GBP always refers to offline GBP and an algorithm always refers to an offline algorithm for simplicity.
In~\cite{BSS25}, Bibbens et al. assume that both $\beta$ and $G$ are two fixed constants.
Based on this assumption, they presented an APTAS for GBP for every fixed $\beta$ and $G$.
Given a fixed constant $\epsilon>0$, they set $\delta = \min \{ \frac \epsilon2, \frac {\epsilon(1-\beta G)}{2\beta} \}$ when $\beta G \le 1$
and $\delta = \frac \epsilon {4\beta^2}$ otherwise.
It is worth mentioning that the running time is $O(n^{1+s^{1/\delta}})$ where $s=\frac 1{\delta \epsilon} (1+\beta(1-G))$.
We generalize their setting by considering $\beta$ and $G$ as part of the input (See Problem~\ref{prob01} for the detailed definition).
So, the APTAS in~\cite{BSS25} no longer works, since its running time becomes exponential.
We also study a constrained variant of GBP, referred to as CGBP, in Problem~\ref{prob02}.
In CGBP, besides $\beta$ and $G$, we are given an upper bound $U$
and the problem aims to minimize the number of bins used to pack all items such that the total energy consumption is bounded by $U$.
One sees that in the description of CGBP, $\beta$ can be omitted if we set $U' = U/\beta$, $\beta' = 1$ and $G' = G$.
But we keep using $\beta$, $G$ and $U$ for consistency with GBP.

\subsection{Our contributions}

In this paper, given an optimal packing $\sigma^*$ for GBP or CGBP, we present an algorithm that produces a set of polynomially many bin packings in polynomial time
such that there exists a bin packing $\sigma$, in which the number of bins used is at most $1+\epsilon$ times that of $\sigma^*$ plus $1$
and the total energy consumption does not exceed that of $\sigma^*$.
For GBP, the algorithm returns a bin packing from all bin packings produced with the minimum objective value,
while for CGBP, the algorithm returns a bin packing that minimizes the number of bins used subject to the energy constraint.
Recall the properties of $\sigma$ and clearly, our algorithm is an APTAS for both GBP and CGBP.
Recall again that $\beta$, $G$ and $U$ are all part of the input and therefore, our APTAS for GBP extends the one in~\cite{BSS25}.
Using the APTAS for GBP or CGBP, we develop the first $\frac 32$-approximation algorithm, which matches the lower bound for the classic BP problem.

The remainder of this paper is organized as follows.
Section~2 gives some definitions and notations for GBP and CGBP.
Sections~3 and~4 present an APTAS for GBP and CGBP when $G \ge \frac \epsilon3$ and $G < \frac \epsilon3$, respectively where $0 < \epsilon \le 1$ is a fixed constant.
In Section~5, we show an algorithm for GBP and CGBP with an absolute approximation ratio of $\frac 32$.
Lastly, we conclude the paper in Section~6 and summarize some future research problems.

\section{Preliminaries}
\label{sec2}
%==================================================================================================

In the {\em green bin packing} (GBP for short) problem, each bin has a capacity of $1$ and 
is associated with two parameters $G \in [0, 1]$ and $\beta \ge 0$.
There are $n$ items $\mc{I}$ with sizes $s_1, \ldots, s_n$, which should be packed into a set of bins.
Without loss of generality, we can assume $1 \ge s_1 \ge \ldots \ge s_n > 0$.
It is worth mentioning that both $\beta$ and $G$ are assumed to be fixed constants in~\cite{BSS25};
in contrast, our formulation treats both of them as part of the input.
One clearly sees that our formulation generalizes the one in~\cite{BSS25}.
The capacity within $[0, G]$ is referred to as the {\em green space} and 
each unit of capacity beyond the green space incurs an energy consumption of $\beta$.
In other words, if the {\em load}, i.e., the total size of the items packed in the bin, is $x$, then the energy consumed by the bin is $f_{\beta, G}(x) = \max \{ 0, \beta (x-G) \}$ as defined in Definition~\ref{def01}.

\begin{definition}
\label{def01}
The {\em energy consumption} function $f_{\beta, G}: [0, 1] \to \mathbb{R}_{\ge 0}$ maps the load of a bin to a non-negative real number, 
which can be expressed as 
$f_{\beta, G}(x) = \max \{0, \beta (x-G) \}$.
Clearly, given $\beta \ge 0$ and $G \in [0, 1]$, the function $f_{\beta, G}(x)$ is non-decreasing in $x$.
\end{definition}

The GBP aims to minimize the number of bins opened plus the total energy consumption.

\begin{problem}
\label{prob01}
{\em (Green Bin Packing, GBP for short)}
Given a set of items $\mc{I} = \{ 1, \ldots, n \}$ and two parameters $\beta \ge 0$, $G \in [0, 1]$, 
each item $i$ has a size of $s_i \in (0, 1]$ and 
the items should be packed into a set of unit capacity bins with the objective of minimizing 
the number of opened bins plus the total energy consumption (See $f_{\beta, G}$ in Definition~\ref{def01}).
\end{problem}

The {\em constrained green bin packing} (CGBP for short) problem is a variant of GBP,
whose objective is to minimize the number of opened bins 
such that the total energy consumption does not exceed a given upper bound $U \ge 0$.
One clearly sees that if $\beta=0$ or $G = 1$, then $f_{\beta, G}(x)=0$ for any $x \in [0, 1]$,
which indicates that the energy consumption is always $0$.
So, both GBP and CGBP are reduced to the classic bin packing problem.

Recall that $f_{\beta, G}(x)$ is non-decreasing in $x$ and 
therefore, packing each item in a single bin yields a bin packing with the least energy consumption $\sum_{i=1}^n f_{\beta, G}(s_i)$.
It follows that CGBP admits a feasible solution if and only if $U \ge \sum_{i=1}^n f_{\beta, G}(s_i)$.

\begin{problem}
\label{prob02}
{\em (Constrained Green Bin Packing, CGBP for short)}
Given a set of items $\mc{I} = \{ 1, \ldots, n \}$ and three parameters $\beta \ge 0$, $G \in [0, 1]$, $U \ge \sum_{i=1}^n f_{\beta, G}(s_i)$, 
each item $i$ has a size of $s_i \in (0, 1]$ and
the items should be packed into a set of unit capacity bins with the objective of minimizing 
the number of opened bins 
while ensuring that the total energy consumption (See $f_{\beta, G}$ in Definition~\ref{def01}), is bounded by $U$.
\end{problem}

%==================================================================================================

%In this section, we will present an APTAS for GBP (see Problem~\ref{prob01}).
We emphasize that $\beta$, $G$ and $U$ are all regarded as part of the input in GBP and CGBP.
Let $\epsilon \in (0, 1]$ be a fixed constant and for simplicity, we assume $\frac 1\epsilon$ is an integer.
Let $\Delta \le \max \{ G, \frac \epsilon3 \}$ be a function of $\epsilon$ or $G$ and in each later section, we will show the expression of $\Delta$ before we use it.
Assume that $\mc{I}$ is an instance of GBP or CGBP.
Next, we distinguish the {\em large, medium} and {\em tiny} items in $\mc{I}$.

\begin{definition}
\label{def02}
An item $i$ is {\em large} if $s_i > \max \{ G, \frac \epsilon3 \}$.
Similarly, an item $i$ is {\em medium} (respectively, {\em tiny}) if $\Delta < s_i \le \max\{ G, \frac \epsilon3 \}$ (respectively, $s_i \le \Delta$).
We let $\mc{L}$ (respectively, $\mc{M}$ and $\mc{T}$) be the set of large (respectively, medium and tiny) items.
%Given a bin packing $\sigma$, the set of large (respectively, medium and tiny) items packed in the bin $i$ is denoted as $\mc{L}_i(\sigma)$ (respectively, $\mc{M}_i(\sigma)$ and $\mc{T}_i(\sigma)$).
%The {\em large load} (respectively, {\em medium load} and {\em tiny load}) of the bin $i$ in $\sigma$ is the total size of the large (respectively, medium and tiny) items packed in the bin.
\end{definition}

%Given a bin packing $\sigma$ for the instance $\mc{I}$ of GBP or CGBP, let $b(\sigma)$ denote the number of bins used in $\sigma$
%and $E(\sigma)$ be the total energy consumed by $\sigma$.
The bins in a bin packing for $\mc{I}$ can be partitioned into three categories: 
A bin containing at least one large item;
a bin with load less than $G$ and;
a bin with load at least $G$ but containing no large item.
We define these three types of bins below.

\begin{definition}
\label{def03}
A {\em large-item bin} is a bin consisting of at least one large item.
Clearly, the load of a large-item bin is more than $G$.
A {\em light bin} is a bin with load less than $G$.
If a bin is neither a large-item bin nor a light bin, then we say the bin is {\em (large-free) heavy bin}. 
That is, a heavy bin contains no large item but the load is at least $G$.

Given a packing $\sigma$, we let $b_L(\sigma)$ (respectively, $b_h(\sigma)$ and $b_\ell(\sigma)$) denote the number of large-item bins
(respectively, heavy bins and light bins) used in $\sigma$.
Moreover, let $b(\sigma)$ and $E(\sigma)$ denote the total number of bins and the total energy consumption of $\sigma$, respectively.
So, $b(\sigma) = b_L(\sigma)+b_h(\sigma)+b_\ell(\sigma)$.
\end{definition}

Clearly, by Definitions~\ref{def01} and~\ref{def03}, each light bin incurs no energy consumption.
In fact, if the total load of the large-item and heavy bins in $\sigma$ is $K$, then 
the total energy consumption $E(\sigma)$ is $\beta K - \beta G (b_L(\sigma)+b_h(\sigma))$.

For simplicity, for any subset $\mc{I}' \subseteq \mc{I}$ of items, we let $S(\mc{I}')$ denote the total size of the items in $\mc{I}'$.
Moreover, we fix an optimal bin packing $\sigma^*$ for the instance $\mc{I}$ of GBP or CGBP,
which has the minimum energy consumption among all optimal packings.
Given the packing $\sigma^*$, we will present an algorithm that always produces a set of polynomially many bin packings in polynomial time
such that there exists a bin packing $\sigma$
with $b(\sigma) \le (1+\epsilon)b(\sigma^*)+1$ and $E(\sigma) \le E(\sigma^*)$.
For GBP, the algorithm outputs a bin packing with the minimum objective value as defined in Problem~\ref{prob01},
while for CGBP, the algorithm outputs a bin packing that satisfies the energy constraint and uses the minimum number of bins.
Due to the existence of $\sigma$, our algorithm is an APTAS for both GBP and CGBP.
We will show such a $\sigma$ exists in the later sections.
Without loss of generality, we assume $\sigma^*$ has no empty bins and thus $b(\sigma^*) \le n$.
Later, we may transform $\sigma^*$ into different bin packings, in which empty bins are permitted.

\section{An APTAS for GBP and CGBP when $G \ge \frac \epsilon3$}

In this section, we present an APTAS for GBP and CGBP when $G \ge \frac \epsilon3$.
By Definition~\ref{def02}, a large item has a size greater than $G$.

\subsection{Handling large items}

We first assume $|\mc{L}| \ge \frac {24}{\epsilon^2}$.
Recall that $s_1 \ge s_2 \ge \ldots \ge s_n$ and the set $\mc{L}$ of large items consists of the first $|\mc{L}|$ items in $\mc{I}$.
Let $m = \lceil \frac {\epsilon^2 |\mc{L}|}{24} \rceil$.
Note that the large item has a size at least $\max \{ G, \frac \epsilon3 \} \ge \frac \epsilon3$.
Thus, the number of bins in $\sigma^*$ is at least $\frac \epsilon3 |\mc{L}|$.
By $|\mc{L}| \ge \frac {24}{\epsilon^2}$, we have 
\begin{equation}
\label{eq01}
m \le \frac {\epsilon^2 |\mc{L}|}{24} + 1 \le \frac {\epsilon^2}{24} |\mc{L}| + \frac {\epsilon^2}{24} |\mc{L}|
= \frac \epsilon 4 \cdot \frac \epsilon3 |\mc{L}| \le \frac \epsilon 4 \cdot b(\sigma^*).
\end{equation}

We perform a {\em linear grouping} on the large items.
The first group $\mc{L}_1$ contains the first $m$ items in $\mc{L}$ and the second group $\mc{L}_2$ contains the next $m$ items.
We repeatedly do the above grouping until all items have been placed into a group.
Let $\theta$ be the number of groups.
Note that $|\mc{L}_i| = m$ for each $i \le \theta-1$ and $|\mc{L}_\theta| \le m$.
Therefore, we have $m (\theta-1) \le |\mc{L}| \le m \theta$.
It is not hard to see that $3 \le \theta \le \frac {24}{\epsilon^2}+1$.
For each group $\mc{L}_i$, we round the size of each item in $\mc{L}_i$ up to the largest item size in $\mc{L}_i$.
For each $1 \le i \le \theta$, $\mc{L}'_i$ denotes the group corresponding to $\mc{L}_i$ with its rounded sizes and $\mc{L}' = \cup_{i = 1}^\theta \mc{L}'_i$.

Note that the number of groups in $\mc{L}'$ is $\theta$, too.
Clearly, $|\mc{L}'_i| = |\mc{L}_i| = m$ for each $i \le \theta-1$ and $|\mc{L}'_\theta| = |\mc{L}_\theta| \le m$.
For simplicity, we let $\mc{B}_1$ be the set of bins in $\sigma^*$, each of which contains at least one item in $\cup_{i=1}^{\theta-2} \mc{L}_i$.
Moreover, let $\mc{B}_2$ be the set of bins in $\sigma^*$, each of which contains at least one large item from $\mc{L}_{\theta-1}$ and no large items from $\mc{L} \setminus \mc{L}_{\theta-1}$.
So, $\mc{B}_1 \cap \mc{B}_2 = \emptyset$.
For each large-item bin $j \notin \mc{B}_1 \cup \mc{B}_2$, it consists of at least one item in $\mc{L}_\theta$.
Since $|\mc{B}_2| \le |\mc{L}_{\theta-1}| = m$ and $|\mc{L}_\theta| \le m$, we have $\max \{ |\mc{B}_2|, |\mc{L}_\theta| \} \le m$.
We can conclude that 
\begin{equation}
\label{eq02}
b_L(\sigma^*) \le |\mc{B}_1| + |\mc{B}_2| + |\mc{L}_\theta| 
\le  |\mc{B}_1| + \min \{ |\mc{B}_2|, |\mc{L}_\theta| \} + m.
\end{equation}

Next, we show a mapping $\phi$ from the items in $\cup_{i=2}^\theta \mc{L}'_i$ to $\cup_{i=1}^{\theta-1} \mc{L}_i$.
For each $2 \le i \le \theta-1$, $|\mc{L}'_i| = |\mc{L}_{i-1}| = m$.
Thus, the items in $\mc{L}'_i$ can be mapped to $\mc{L}_{i-1}$ one by one, following their order.
For the items in $\mc{L}'_\theta$, we first choose $\min \{ |\mc{L}'_\theta|, |\mc{B}_2| \}$ items in $\mc{L}_{\theta-1}$ assigned to the bins in $\mc{B}_2$,
with at most one item chosen from each bin in $\mc{B}_2$.
By $|\mc{L}'_\theta| \le m = |\mc{L}_{\theta-1}|$, the items in $\mc{L}'_\theta$ can be mapped to $\mc{L}_{\theta-1}$ 
such that the selected items from $\mc{L}_{\theta-1}$ lie in the image of the mapping $\phi$.
One sees that for $2 \le i \le \theta$, each item in $\mc{L}'_i$ is mapped to an item in $\mc{L}_{i-1}$.
Therefore, $s_j \le s_{\phi(j)}$ for each item $j \in \cup_{i=2}^\theta \mc{L}'_i$.

\begin{lemma}
\label{lemma01}
There exists a packing $\sigma_1$ for $\mc{L}' \cup \mc{M} \cup \mc{T}$ such that 
\begin{itemize}
\item[1.] $b_L(\sigma_1) \ge b_L(\sigma^*)$, $b_h(\sigma_1) \ge b_h(\sigma^*)$ and $b(\sigma_1) \le (1+\frac \epsilon4) b(\sigma^*)$;

\item[2.] the total energy consumption of $\sigma_1$ is at most $E(\sigma^*)+\beta (S(\mc{L}') - S(\mc{L}))$.
\end{itemize}
\end{lemma}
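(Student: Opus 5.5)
We construct $\sigma_1$ directly from $\sigma^*$ by the standard linear-grouping shift, using the mapping $\phi$ defined above. We start from $\sigma^*$ and, for every $j \in \cup_{i=2}^\theta \mc{L}'_i$, place the rounded item $j$ into the slot occupied in $\sigma^*$ by the original item $\phi(j) \in \cup_{i=1}^{\theta-1}\mc{L}_i$. Original large items of $\sigma^*$ that are not in the image of $\phi$ are removed; these are exactly the items of $\mc{L}_\theta$ together with the items of $\mc{L}_{\theta-1}$ not hit by $\phi$. All medium and tiny items stay where they are. Since $s_j \le s_{\phi(j)}$, every modified bin has load at most its load in $\sigma^*$, so capacities are respected. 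The remaining rounded items, namely the $m$ items of $\mc{L}'_1$, are each packed into a new separate bin. By (\ref{eq01}) this costs at most $m \le \frac\epsilon4 b(\sigma^*)$ extra bins, giving $b(\sigma_1)\le (1+\frac\epsilon4)b(\sigma^*)$. We keep every bin of $\sigma^*$, possibly emptied, since empty bins are permitted; thus the bin count never decreases below that of $\sigma^*$ except through emptiness.

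For item 1, the main obstacle is showing $b_L(\sigma_1)\ge b_L(\sigma^*)$ and $b_h(\sigma_1)\ge b_h(\sigma^*)$, because some large-item bins of $\sigma^*$ may lose all their large items. We count as follows.
\begin{itemize}
\item Bins in $\mc{B}_1$ contain an item of $\cup_{i\le\theta-2}\mc{L}_i$, all of which are in the image of $\phi$. So every bin of $\mc{B}_1$ still contains a large item and remains a large-item bin.
\item The choice of $\phi$ on $\mc{L}'_\theta$ guarantees that $\min\{|\mc{B}_2|,|\mc{L}_\theta|\}$ bins of $\mc{B}_2$ still receive a large rounded item.
\item The $m$ new bins holding $\mc{L}'_1$ are large-item bins.
\end{itemize}
Hence $b_L(\sigma_1)\ge |\mc{B}_1|+\min\{|\mc{B}_2|,|\mc{L}_\theta|\}+m\ge b_L(\sigma^*)$ by (\ref{eq02}).

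The heavy bins of $\sigma^*$ contain no large items and are untouched, so they stay heavy and $b_h(\sigma_1)\ge b_h(\sigma^*)$. A former large-item bin that lost its large items is counted as heavy or light according to its new load. If counting such bins as heavy would break $b_L(\sigma_1)\ge b_L(\sigma^*)$, this does not occur, because that inequality already holds without them.

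For item 2, we compare energies bin by bin. An old bin's load does not increase, and $f_{\beta,G}$ is non-decreasing, so each old bin's energy is at most its energy in $\sigma^*$. A new bin holds a single item of $\mc{L}'_1$ of size $s>G$, so its energy is $\beta(s-G)$. It remains to bound the sum of these new-bin energies by $E(\sigma^*)$-style terms plus $\beta(S(\mc{L}')-S(\mc{L}))$.

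For this we use a more careful accounting. In old bins the energy change is exactly $\beta$ times the load change whenever the bin stays above $G$, and it is no larger in every case, since $f$ is $\beta$-Lipschitz and non-decreasing. Summing loads, the total load is $S(\mc{L}')+S(\mc{M})+S(\mc{T})$ in $\sigma_1$ versus $S(\mc{L})+S(\mc{M})+S(\mc{T})$ in $\sigma^*$. Old bins lose load and the new bins add load $S(\mc{L}'_1)$; each new bin has load above $G$, but its energy is only $\beta(s-G)\le\beta s$.

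The bound I would aim for is
\[
E(\sigma_1)\le E(\sigma^*)+\beta\bigl(S(\mc{L}')-S(\mc{L})\bigr).
\]
To get it, I would write $E(\sigma_1)-E(\sigma^*)\le \beta\cdot(\text{net load increase})$ by Lipschitzness of $f$ applied per bin, treating a new bin as going from load $0$ to $s$. The net load increase over all bins equals exactly $S(\mc{L}')-S(\mc{L})$. The delicate point is that Lipschitzness bounds a decrease only by $\beta$ times the decrease, not equal to it. If an old bin drops below $G$, its energy falls by less than $\beta$ times its load loss, so the per-bin inequality goes the wrong way. I would handle this by using $f(x')-f(x)\le\beta(x'-x)$ only in the regime where the bin stays at or above $G$, noting that bins keeping a large item stay above $G$. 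For bins that lose all large items, I would pair each such bin with a new bin or rely on the choice of $\sigma^*$ with minimum energy; this is the step I would have to check carefully.
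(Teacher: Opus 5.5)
Your construction of $\sigma_1$ and your proof of item~1 match the paper's. Item~2, however, is not proved. You correctly notice that the per-bin Lipschitz estimate fails for an old large-item bin that loses all its large items and drops below $G$: its energy falls by only $\beta(x-G)$, which is less than $\beta$ times its load loss. You then only suggest ``pairing with a new bin'' or ``the minimality of $\sigma^*$'' and carry out neither. The minimality of $\sigma^*$ is irrelevant here; it is only used in Section~4.

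The missing observation is that the pairing is exactly what your count in item~1 guarantees. Let $R$ be the number of large-item bins of $\sigma^*$ that still hold a large item in $\sigma_1$, and $D$ the number that lose all of them. Rounded items only enter bins of their $\phi$-images, so $b_L(\sigma^*)=R+D$ and $b_L(\sigma_1)=R+m$. Hence $b_L(\sigma_1)\ge b_L(\sigma^*)$ is equivalent to $D\le m$. Pair each of these $D$ bins (load $x>G$ dropping to $x'$) with a distinct new bin holding a rounded item of size $s>G$. Since $f_{\beta,G}(x')\le\beta x'$, the pair's energy change is at most $\beta x'+\beta(s-G)-\beta(x-G)=\beta(x'+s-x)$. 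In other words, the unused slack $\beta G$ of the new bin pays for the deficit of the old one. The remaining bins behave as follows:
\begin{itemize}
\item bins that keep a large item stay above $G$, so their energy changes by exactly $\beta$ times their load change;
\item unpaired new bins change by $\beta(s-G)\le\beta s$;
\item all other bins are untouched.
\end{itemize}
Summing, and using that the total load change is $S(\mc{L}')-S(\mc{L})$, gives item~2.

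The paper does the same accounting globally rather than per bin. It writes the energy as $\beta$ times the load of non-light bins minus $\beta G\,(b_L+b_h)$. Light bins of $\sigma^*$ contain no large item, so they are unchanged in $\sigma_1$ and the light load does not shrink. Together with $b_L(\sigma_1)+b_h(\sigma_1)\ge b_L(\sigma^*)+b_h(\sigma^*)$ from item~1, this yields the bound at once. Either way, the key link is that item~1's count of non-light bins controls the energy; without making that link, your item~2 remains a gap.
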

\begin{proof}
We construct the packing $\sigma_1$ from the optimal packing $\sigma^*$ for $\mc{I}$:
First, we assign each item in $\mc{M} \cup \mc{T}$ to the same bin as in $\sigma^*$.
We open a new bin to pack each item in $\mc{L}'_1$.
Finally, each item $j$ in $\cup_{i=2}^\theta \mc{L}'_i$ is assigned to the bin where the item $\phi(j)$ is placed in $\sigma^*$.
By $s_j \le s_{\phi(j)}$, $\sigma_1$ is a feasible bin packing for $\mc{L}' \cup \mc{M} \cup \mc{T}$.

We first prove $b_L(\sigma_1) \ge b_L(\sigma^*)$.
Note that by the mapping $\phi$, all the bins in $\mc{B}_1$ and at least $\min \{ |\mc{L}'_\theta|, |\mc{B}_2| \}$ bins in $\mc{B}_2$ are still large-item bins in $\sigma_1$.
Moreover, the newly opened $|\mc{L}'_1|$ bins in $\sigma_1$ are large-item bins.
Recall that $\theta \ge 3$ and thus, $|\mc{L}'_1| = m$.
Therefore, by $|\mc{L}'_1| = m$ and Eq.(\ref{eq02}), we have $b_L(\sigma_1) \ge |\mc{B}_1| + \min \{ |\mc{L}'_\theta|, |\mc{B}_2| \} + m \ge b_L(\sigma^*)$.
Note that each heavy bin in $\sigma^*$ remains heavy in $\sigma_1$.
It follows that $b_h(\sigma_1) \ge b_h(\sigma^*)$.
By Eq.(\ref{eq01}), $b(\sigma_1) \le b(\sigma^*) + |\mc{L}'_1| = b(\sigma^*)+m \le (1+\frac \epsilon4) b(\sigma^*)$.
The first statement is proved.

Next, we prove the second statement.
Let $K$ be the total load of the light bins in $\sigma^*$.
So, the total size of the items assigned to the large-item or heavy bins in $\sigma^*$ is $S(\mc{I})-K = S(\mc{L}) + S(\mc{M} \cup \mc{T}) - K$.
Therefore, the energy consumption $E(\sigma^*)$ equals to $\beta (S(\mc{L})+S(\mc{M} \cup \mc{T})-K) - \beta G (b_L(\sigma^*)+b_h(\sigma^*))$.
By the construction of $\sigma_1$, each light bin in $\sigma^*$ keeps the same elements in $\sigma_1$.
Hence the total load of the light bins in $\sigma_1$ is at least $K$.
By $b_L(\sigma_1) \ge b_L(\sigma^*)$ and $b_h(\sigma_1) \ge b_h(\sigma^*)$, the energy consumption $E(\sigma_1)$ is at most 
\[
\beta (S(\mc{L}')+S(\mc{M} \cup \mc{T})-K) - \beta G (b_L(\sigma^*)+b_h(\sigma^*))
= E(\sigma^*)+\beta (S(\mc{L}') - S(\mc{L})),
\]
which completes the proof.
\end{proof}

When $|\mc{L}| < \frac {24}{\epsilon^2}$, we simply let $\mc{L}' = \mc{L}$.
In this case, the packing $\sigma^*$ for $\mc{L}' \cup \mc{M} \cup \mc{T}$ satisfies the two statements stated in Lemma~\ref{lemma01}.
Moreover, regardless of whether $|\mc{L}| < \frac {24}{\epsilon^2}$, the number of distinct sizes in $\mc{L}'$ is at most $\max \{ \frac {24}{\epsilon^2}, \theta \} \le \frac {24}{\epsilon^2}+1$.

\subsection{Handling medium items}

Similarly to the large items, we handle the medium items in this subsection.
We let $\Delta = \frac {\epsilon^2}{39}$.
By Definition~\ref{def02} and $G \ge \frac \epsilon3$, for each medium item $i$, we have $\Delta < s_i \le G$.
When $|\mc{M}| \ge \frac 8{\epsilon \Delta}$, we perform a linear grouping on $\mc{M}$ too.
We abuse $m = \lceil \frac {\epsilon \Delta |\mc{M}|}8\rceil$
and group every $m$ consecutive medium items in $\mc{M}$ together.
Let $\mc{M}_i$ be the $i$-th group and thus, in each group $\mc{M}_i$ except the last one, the number of items in $\mc{M}_i$ is exactly $m$.
In other words, $\mc{M}$ is partitioned into at most $\frac 8{\epsilon \Delta}+1$ groups.
The group $\mc{M}'_i$ can be obtained from $\mc{M}_i$ by rounding the size of each item in $\mc{M}_i$ up to the largest item size in $\mc{M}_i$.
For simplicity, we let $\mc{M}' = \cup_{i \ge 1} \mc{M}'_i$.
Next, we show a similar result as in Lemma~\ref{lemma01}.

\begin{lemma}
\label{lemma02}
There exists a packing $\sigma_2$ for $\mc{L}' \cup \mc{M}' \cup \mc{T}$ such that 
\begin{itemize}
\item[1.] $b(\sigma_2) \le (1+\frac \epsilon2) b(\sigma^*)$ and;

\item[2.] the total energy consumption of $\sigma_2$ is at most $E(\sigma^*)+\beta (S(\mc{L}') - S(\mc{L}))$.
\end{itemize}
\end{lemma}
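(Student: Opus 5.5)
Starting from the packing $\sigma_1$ of Lemma~\ref{lemma01}, we build $\sigma_2$ by linear-grouping the medium items in the same way as the large ones, with one change: the energy must not increase at all. When $|\mc{M}| < \frac 8{\epsilon\Delta}$ we have $\mc{M}' = \mc{M}$, and we simply take $\sigma_2 = \sigma_1$. Otherwise we let $m = \lceil \frac{\epsilon\Delta|\mc{M}|}{8}\rceil$ and bound $m$ in the style of Eq.~(\ref{eq01}). Each medium item has size more than $\Delta$, so $b(\sigma^*) \ge S(\mc{M}) > \Delta|\mc{M}|$. Together with $|\mc{M}| \ge \frac 8{\epsilon\Delta}$ this gives $m \le \frac{\epsilon\Delta|\mc{M}|}{8}+1 \le \frac{\epsilon\Delta|\mc{M}|}{4} \le \frac\epsilon4 b(\sigma^*)$.

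The construction is the standard shift. As in the large-item case, we define a map $\phi$ that sends the items of $\mc{M}'_i$ one by one into $\mc{M}_{i-1}$ for every $i \ge 2$, so that $s_j \le s_{\phi(j)}$. Each $j \in \mc{M}'_i$ with $i\ge 2$ goes into the bin of $\sigma_1$ that held $\phi(j)$. The items of $\mc{M}_1$, together with any of them left displaced, are not put back into old bins; instead the $m$ items of $\mc{M}'_1$ are packed into new bins. Every item of $\mc{L}'\cup\mc{T}$ stays where it was in $\sigma_1$. Since every bin of $\sigma_1$ now holds items of no larger total size, the packing is feasible, and the bin count is $b(\sigma_2) \le b(\sigma_1)+m \le (1+\frac\epsilon4)b(\sigma^*)+\frac\epsilon4 b(\sigma^*)$, which is statement~1.

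For statement~2, the point is that no existing bin gains load: every original bin of $\sigma_1$ either keeps its load or loses some, and $f_{\beta,G}$ is non-decreasing, so those bins consume no more energy than before. That leaves only the new bins holding $\mc{M}'_1$, and this is where the choice of bin count matters. Each medium item has size at most $\max\{G,\frac\epsilon3\} = G$. If we put exactly one item of $\mc{M}'_1$ into each new bin, then every new bin has load at most $G$ and consumes no energy. That is precisely why we spend $m$ bins here rather than packing these items more tightly. Hence $E(\sigma_2) \le E(\sigma_1) \le E(\sigma^*)+\beta(S(\mc{L}')-S(\mc{L}))$ by Lemma~\ref{lemma01}.

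The main obstacle is to check that the unit-load accounting does not lose energy in any bin. Unlike in Lemma~\ref{lemma01}, there is no need to count heavy or large-item bins here, because loads only decrease. Monotonicity of $f_{\beta,G}$ applied bin by bin therefore suffices, and we do not need the global formula $\beta K - \beta G(b_L+b_h)$. The one thing to confirm is that the $m$ singleton bins are counted within the $\frac\epsilon4 b(\sigma^*)$ budget, which is exactly the bound on $m$ from the first paragraph.
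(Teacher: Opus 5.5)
Your proposal is correct and follows essentially the same route as the paper. Both arguments use the same shift, placing each item of $\mc{M}'_i$ ($i\ge 2$) into the $\sigma_1$-bin of its image in $\mc{M}_{i-1}$, and the same singleton bins for $\mc{M}'_1$, which have load at most $G$ and so consume no energy. Both also bound $m \le \frac\epsilon4 b(\sigma^*)$ via $|\mc{M}|\ge \frac 8{\epsilon\Delta}$ and $b(\sigma^*)\ge \Delta|\mc{M}|$, and conclude $E(\sigma_2)\le E(\sigma_1)$ from bin-by-bin monotonicity of $f_{\beta,G}$, with the trivial case $\sigma_2=\sigma_1$ when $|\mc{M}|<\frac 8{\epsilon\Delta}$.
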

\begin{proof}
We construct the packing $\sigma_2$ from the packing $\sigma_1$ in Lemma~\ref{lemma01} for $\mc{L}' \cup \mc{M} \cup \mc{T}$:
Each item in $\mc{L}' \cup \mc{T}$ is packed into the same bin as in $\sigma_1$.
Then, each item in $\mc{M}'_1$ is placed alone in a new bin.
Note that for each $i \ge 2$, $|\mc{M}'_i| \le |\mc{M}_{i-1}|$ and thus, 
we can map the items in $\mc{M}'_i$ to $\mc{M}_{i-1}$ one by one, following their order.
Therefore, the size of an item in $\cup_{i \ge 2} \mc{M}'_i$ does not exceed the size of its mapped item.
We finally pack each item in $\cup_{i \ge 2} \mc{M}'_i$ to the bin where its mapped item is packed in $\sigma_1$.
One sees that $\sigma_2$ is a feasible bin packing for $\mc{L}' \cup \mc{M}' \cup \mc{T}$.

Since $|\mc{M}| \ge \frac 8{\epsilon \Delta}$ and each medium item has a size at least $\Delta$, we have $b(\sigma^*) \ge \Delta |\mc{M}| \ge \frac 8\epsilon$.
It follows that 
\[
m \le \frac {\epsilon \Delta |\mc{M}|}8 + 1 \le \frac \epsilon8 \cdot b(\sigma^*) + \frac \epsilon8 \cdot b(\sigma^*)
= \frac \epsilon 4 \cdot b(\sigma^*).
\]
By Lemma~\ref{lemma01} and the construction of $\sigma_2$, we have 
\[
b(\sigma_2) \le b(\sigma_1)+m \le (1+\frac \epsilon4) b(\sigma^*) + \frac \epsilon 4 \cdot b(\sigma^*) \le (1+\frac \epsilon2) b(\sigma^*).
\]
The first statement is proved.

For each newly opened bin, it only contains a medium item with a size at most $G$.
Therefore, the load is at most $G$ and the bin will not consume any energy.
On the other hand, for each bin in $\{ 1, \ldots, b(\sigma_1) \}$, its load in $\sigma_2$ is at most that in $\sigma_1$.
Recall that the energy consumption is non-decreasing in the load.
We can conclude that $E(\sigma_2) \le E(\sigma_1)$ and the lemma is proved by the second statement of Lemma~\ref{lemma01}.
\end{proof}

If $|\mc{M}| < \frac 8{\epsilon \Delta}$, then we simply let $\mc{M}' = \mc{M}$.
Therefore, regardless of whether $|\mc{M}| < \frac 8{\epsilon \Delta}$, $\mc{M}'$ has at most $\frac 8{\epsilon \Delta}+1$ distinct sizes.
Moreover, Lemma~\ref{lemma02} is always satisfied in which $\sigma_2 = \sigma_1$ if $|\mc{M}| < \frac 8{\epsilon \Delta}$.

\subsection{The complete APTAS}

Before we present our APTAS, we define the {\em bin type} of a bin $j$ that describes the large and medium items as well as the total size, denoted as $z_j$, of the tiny items packed into the bin in $\sigma_2$. 
The rounded total size, denoted as $\hat{z}_j$, of the tiny items is the value obtained by rounding $z_j$ up to the least multiple of $\Delta$.
Therefore, $\hat{z}_j = i \Delta$ for some $0 \le i \le \frac 1\Delta$.
Recall that the number of distinct sizes in $\mc{L}' \cup \mc{M}'$ is at most $\frac {24}{\epsilon^2}+\frac 8{\epsilon \Delta}+2$.
Since the smallest item size in $\mc{L}' \cup \mc{M}'$ is at least $\Delta$, each bin has at most $\frac 1\Delta$ items in $\mc{L}' \cup \mc{M}'$.
These together allow us to describe the bin types for $\sigma_2$.

\begin{definition}
\label{def04}
Let $s'_1, \ldots, s'_d$ be the distinct sizes in $\mc{L}' \cup \mc{M}'$ where $d \le \frac {24}{\epsilon^2}+\frac 8{\epsilon \Delta}+2$.
A {\em bin type} is a $(d+1)$-tuple $(n_1, \ldots, n_d, i)$ where for each $1 \le j \le d$, $n_j$ is the number of items packed into the bin with a size of $s'_j$
and the rounded total size of the tiny items is $i \Delta$.
Clearly, $n_j \le \frac 1\Delta$ and $i \le \frac 1\Delta$. 
Thus, the number of distinct bin types is at most $(\frac 1\Delta+1)^{d+1}$, which is a fixed constant.

A {\em bin configuration} describes the number of bins for each bin type.
By the first statement of Lemma~\ref{lemma02} and $\epsilon \le 1$, we have $b(\sigma_2) \le 2 b(\sigma^*) \le 2n$.
For each bin type, the number of bins is in $\{ 0, \ldots, 2n \}$.
So, the number of distinct bin configurations is bounded by $(2n+1)^{(\frac 1\Delta+1)^{d+1}}$, which is polynomial in $n$.
\end{definition}

For the packing $\sigma_2$ in Lemma~\ref{lemma02}, its corresponding bin configuration can be found by enumerating all bin configurations.
Subsequently, we can easily pack the large and medium items into the bins.
Without loss of generality, we assume the first $b_L(\sigma_2)$ bins in $\sigma_2$ are large-item bins, followed by $b_h(\sigma_2)$ heavy bins
and the last $b_\ell(\sigma_2)$ bins are light.
We determine the assignments of the tiny items by a linear program as in LP(\ref{eq03}).
Let $x_{ij}$ be a binary decision variable for an item $i \in \mc{T}$ and a bin $j \le b(\sigma_2)$
such that $x_{ij} = 1$ if and only if the item $i$ is placed in the bin $j$ in $\sigma_2$.
The first constraint in LP(\ref{eq03}) indicates that each item $i$ must be packed into some bin $j$
while the second constraint indicates that the total size of the tiny items packed into the bin $j$ is at most $\hat{z}_j$.
Therefore, LP(\ref{eq03}) has a feasible solution and we compute a basic solution of LP(\ref{eq03}) in polynomial time~\cite{Len83}.
Let $\mc{T}_j = \{ i \in \mc{T}: x_{ij} = 1 \}$ for each $j = 1, \ldots, b(\sigma_2)$ 
and $\mc{T}_0 = \mc{T} \setminus \cup_{j \ge 1} \mc{T}_j$.
So, $\mc{T}_0$ contains all tiny items with fractional solutions
and each $\mc{T}_j$ contains all items $i$ with $x_{ij}=1$.

\begin{equation}
\label{eq03}
\begin{aligned}
% \min \quad & \sum_{j=1}^{b_L(\sigma_2)} x_{ij} s_i   \\
& \sum_{j=1}^{b(\sigma_2)} x_{ij} = 1, \quad  \forall i \in \mc{T}, \\
 & \sum_{i \in \mc{T}} x_{ij} s_i   \le \hat{z}_j,  \quad \forall \mbox{ bin } j. 
\end{aligned}
\end{equation}

\begin{lemma}
\label{lemma03}
$S(\mc{T}_0) \le b(\sigma_2) \cdot \Delta$ and for each $1 \le j \le b(\sigma_2)$, we can find a subset $\hat{\mc{T}}_j$ of $\mc{T}_j$
such that $S(\hat{\mc{T}}_j) \le 2\Delta$ and $S(\mc{T}_j \setminus \hat{\mc{T}}_j) \le z_j$.
\end{lemma}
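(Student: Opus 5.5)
The plan is to use the standard basic-solution counting argument for LP(\ref{eq03}), viewed with $0 \le x_{ij} \le 1$ (the constraint $x_{ij}\le 1$ is implied by the equality constraints and $x_{ij}\ge 0$). LP(\ref{eq03}) has $|\mc{T}|$ equality constraints, one per tiny item, and $b(\sigma_2)$ capacity constraints, one per bin. A basic feasible solution has at most $|\mc{T}| + b(\sigma_2)$ nonzero variables. Every item $i\in\mc{T}$ has at least one nonzero $x_{ij}$, and every item in $\mc{T}_0$ has at least two, since its variables are fractional but sum to $1$. Hence $|\mc{T}| + |\mc{T}_0| \le |\mc{T}| + b(\sigma_2)$, so $|\mc{T}_0| \le b(\sigma_2)$. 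Since every tiny item has size at most $\Delta$, this gives $S(\mc{T}_0) \le b(\sigma_2)\cdot\Delta$, which is the first claim.

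For the second claim, fix a bin $j$. The integral items satisfy $\mc{T}_j \subseteq \{i : x_{ij}>0\}$, so the capacity constraint gives $S(\mc{T}_j) \le \sum_{i\in\mc{T}} x_{ij}s_i \le \hat{z}_j < z_j + \Delta$, using $\hat z_j < z_j+\Delta$ from rounding up to a multiple of $\Delta$ (with $\hat z_j = z_j$ when $z_j$ is already a multiple). I will choose $\hat{\mc{T}}_j$ greedily: remove items of $\mc{T}_j$ one at a time, in any order, until the remaining total is at most $z_j$. If $S(\mc{T}_j)\le z_j$ already, set $\hat{\mc{T}}_j=\emptyset$.

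The point needing care is the bound $S(\hat{\mc{T}}_j)\le 2\Delta$. Just before the last removal, the remaining weight exceeded $z_j$. So everything removed before the last item has total at most $S(\mc{T}_j) - z_j < \Delta$. The last removed item has size at most $\Delta$. Therefore $S(\hat{\mc{T}}_j) < \Delta + \Delta = 2\Delta$, and by construction $S(\mc{T}_j\setminus\hat{\mc{T}}_j)\le z_j$.

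The only subtle step is the counting of nonzero variables in a basic solution. I will make it explicit by writing the LP in standard form with slack variables: there are $|\mc{T}|+b(\sigma_2)$ rows, so a basic solution has at most that many positive variables, among the $x_{ij}$ and the slacks together. In particular, the number of positive $x_{ij}$ is at most $|\mc{T}|+b(\sigma_2)$, which is all the argument needs.
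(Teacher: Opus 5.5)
Your proof is correct and follows the paper's approach. The first claim uses the same basic-solution count giving $|\mc{T}_0| \le b(\sigma_2)$, and the second claim uses the same facts, $S(\mc{T}_j) \le \hat{z}_j < z_j + \Delta$ together with tiny items having size at most $\Delta$. The differences are cosmetic: you remove items until the remainder is at most $z_j$, whereas the paper, when $S(\mc{T}_j) > 2\Delta$, removes a chunk of size between $\Delta$ and $2\Delta$ and otherwise takes $\hat{\mc{T}}_j = \mc{T}_j$.
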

\begin{proof}
We first prove $S(\mc{T}_0) \le b(\sigma_2) \cdot \Delta$.
Since each item in $\mc{T}_0$ has a size at most $\Delta$, it suffices to show $|\mc{T}_0| \le b(\sigma_2)$.
Note that LP(\ref{eq03}) has $|\mc{T}|+b(\sigma_2)$ and thus, in a basic solution, there are at most $|\mc{T}|+b(\sigma_2)$ positive values.
For each item in $\mc{T}_0$, it has at most two positive values and the number of positive values is at least 
$\sum_{j \ge 1} |\mc{T}_j| + 2|\mc{T}_0| = |\mc{T}|+|\mc{T}_0|$.
Therefore, $|\mc{T}|+b(\sigma_2) \ge |\mc{T}|+|\mc{T}_0|$ and we have $|\mc{T}_0| \le b(\sigma_2)$.

Consider a bin $j$.
If $S(\mc{T}_j) \le 2\Delta$, then $\hat{\mc{T}}_j = \mc{T}_j$.
Otherwise, $S(\mc{T}_j) > 2\Delta$.
We greedily choose the items in $\mc{T}_j$ until the total size first exceeds $\Delta$.
The chosen items form the set $\hat{\mc{T}}_j$.
Since the size of a tiny item is at most $\Delta$, we know $\Delta \le S(\hat{\mc{T}}_j) \le 2\Delta$.
By the second constraint of LP(\ref{eq03}) and the definition of $\hat{z}_j$, we have $S(\mc{T}_j) \le \hat{z}_j$ and $\hat{z}_j \le z_j + \Delta$.
It follows that
\[
S(\mc{T}_j \setminus \hat{\mc{T}}_j) = S(\mc{T}_j)-S(\hat{\mc{T}}_j) \le \hat{z}_j - \Delta \le z_j.
\]
The lemma is proved.
\end{proof}

The tiny items can be packed as follows: 
For each $1 \le j \le b(\sigma_2)$, we place the items in $\mc{T}_j \setminus \hat{\mc{T}}_j$ to the bin $j$.
Then, we open a new bin and greedily assign an item in $\cup_{j \ge 1} \hat{\mc{T}}_j \cup \mc{T}_0$ to the bin 
if its load is at most $G$ after adding this item.
We repeatedly do this step until all the items in $\cup_{j \ge 1} \hat{\mc{T}}_j \cup \mc{T}_0$ have been packed.
Let $\sigma'$ be the bin packing produced for $\mc{L}' \cup \mc{M}' \cup \mc{T}$.
We replace back the original size of each item in $\mc{L}' \cup \mc{M}'$ in $\sigma'$ and 
we finally obtain a packing $\sigma$ for the original instance $\mc{I}$.

\begin{theorem}
\label{thm01}
The packing $\sigma$ for $\mc{I}$ satisfies that
\begin{itemize}
\item[1.] $b(\sigma) \le (1+\epsilon) b(\sigma^*) + 1$ and;

\item[2.] the total energy consumption of $\sigma$ is at most $E(\sigma^*)$.
\end{itemize}
\end{theorem}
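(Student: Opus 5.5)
The plan is to read off both statements from the construction: the first $b(\sigma_2)$ bins are handled by Lemmas~\ref{lemma02} and~\ref{lemma03}, and the newly opened bins for the tiny items in $\cup_{j\ge 1}\hat{\mc{T}}_j \cup \mc{T}_0$ are handled by a volume argument.

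\textbf{Bin count.} By Lemma~\ref{lemma03}, the items packed greedily into the new bins have total size at most $2\Delta\, b(\sigma_2) + \Delta\, b(\sigma_2) = 3\Delta\, b(\sigma_2)$.

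In the greedy procedure, a new bin is opened only when the next tiny item (of size at most $\Delta$) would push the load above $G$. Hence every new bin except the last has load greater than $G-\Delta$. Using $G \ge \frac\epsilon3$ and $\Delta = \frac{\epsilon^2}{39} \le \frac{\epsilon}{39}$, we get $G-\Delta \ge \frac{4\epsilon}{13}$. The number of new bins is therefore at most
\[
\frac{3\Delta\, b(\sigma_2)}{4\epsilon/13}+1 = \frac{\epsilon}{4}\, b(\sigma_2)+1 .
\]
Adding the $b(\sigma_2)$ original bins and applying the first statement of Lemma~\ref{lemma02} gives
\[
b(\sigma) \le \left(1+\frac\epsilon4\right)\left(1+\frac\epsilon2\right) b(\sigma^*)+1 \le \left(1+\frac{3\epsilon}4+\frac{\epsilon^2}8\right)b(\sigma^*)+1 \le (1+\epsilon)b(\sigma^*)+1,
\]
where the last step uses $\epsilon\le 1$. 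Restoring the original sizes does not change the number of bins.

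\textbf{Energy, first step ($\sigma'$ versus $\sigma_2$).} Each new bin has load at most $G$ by construction, so it consumes no energy. Each bin $j \le b(\sigma_2)$ in $\sigma'$ contains exactly the same rounded large and medium items as in $\sigma_2$. Its tiny load is $S(\mc{T}_j\setminus\hat{\mc{T}}_j)\le z_j$ by Lemma~\ref{lemma03}, so its load is at most its load in $\sigma_2$. Since $f_{\beta,G}$ is non-decreasing, $E(\sigma')\le E(\sigma_2)$. The second statement of Lemma~\ref{lemma02} then gives $E(\sigma') \le E(\sigma^*)+\beta(S(\mc{L}')-S(\mc{L}))$.

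\textbf{Energy, second step (restoring original sizes).} This is the step I expect to need the most care: we must show that restoring sizes saves at least the surplus $\beta(S(\mc{L}')-S(\mc{L}))$.
\begin{itemize}
\item Any bin containing a large item has load greater than $G$ both before and after restoration, because every original large item already exceeds $G$ on its own. On the region $x\ge G$, $f_{\beta,G}$ is linear with slope $\beta$. Hence shrinking the large items in such a bin by a total amount $\delta$ lowers its energy by exactly $\beta\delta$.
\item Summed over all bins, these savings equal exactly $\beta(S(\mc{L}')-S(\mc{L}))$.
\item Restoring the medium sizes only lowers loads further, so by monotonicity it cannot increase energy.
\end{itemize}
Combining the two steps yields $E(\sigma)\le E(\sigma^*)$, which proves the second statement.
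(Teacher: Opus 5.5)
Your proposal is correct and follows essentially the same route as the paper. For the bin count you use the $3\Delta\, b(\sigma_2)$ volume bound from Lemma~\ref{lemma03}, with every new bin except the last filled to more than $G-\Delta$. For energy you use $E(\sigma')\le E(\sigma_2)$ and the fact that large-item bins stay above $G$, so restoring the large sizes saves exactly $\beta(S(\mc{L}')-S(\mc{L}))$. The only cosmetic difference is that you argue this linear saving bin by bin, whereas the paper aggregates it through the total large-item-bin loads $K$ and $K'$.
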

\begin{proof}
Consider the packing $\sigma'$ for $\mc{L}' \cup \mc{M}' \cup \mc{T}$.
To prove the first statement, it suffices to show $b(\sigma') \le (1+\epsilon) b(\sigma^*)+1$.
By Lemma~\ref{lemma03}, the total size of the items in $\cup_{j \ge 1} \hat{\mc{T}}_j \cup \mc{T}_0$ is bounded by $3 \Delta \cdot b(\sigma_2)$.
Note that each newly opened bin in $\sigma'$, except the last one, has load at least $G-\Delta$; otherwise, the bin can contain at least one more tiny item.
By $\epsilon \le 1$, $G \ge \frac \epsilon3$ and $\Delta = \frac {\epsilon^2}{39}$, the number of newly opened bins in $\sigma'$ is bounded by 
\[
\frac {3\Delta \cdot b(\sigma_2)}{G-\Delta} + 1 \le \frac {3\epsilon}{13-\epsilon} \cdot b(\sigma_2) + 1
\le \frac \epsilon4 \cdot b(\sigma_2)+1.
\]
By the first statement of Lemma~\ref{lemma02} and $\epsilon \le 1$ again, we have
\[
b(\sigma') \le (1+\frac \epsilon4)b(\sigma_2)+1 \le (1+\frac \epsilon4)(1+\frac \epsilon2) b(\sigma^*)+1 \le (1+\epsilon) b(\sigma^*)+1.
\]

Next, we prove the second statement.
Note that the newly opened bin in $\sigma'$ has load at most $G$ and thus, the energy consumed by the bin is $0$.
For each bin $1 \le j \le b(\sigma_2)$, the load of $j$ in $\sigma'$ is at most that in $\sigma_2$.
So, the total energy consumption of $\sigma'$ is bounded by that of $\sigma_2$.
By the second statement of Lemma~\ref{lemma02}, it suffices to show $E(\sigma) \le E(\sigma')-\beta (S(\mc{L}') - S(\mc{L}))$.
Let $K$ and $K'$ be the total load of the large-item bins in $\sigma$ and $\sigma'$, respectively.
One sees that an item is large in $\sigma$ if and only if it is large in $\sigma'$.
Therefore, $b_L(\sigma) = b_L(\sigma')$.
%Hence, the set of large-item bins in $\sigma'$ is the same as $\sigma$.
From $\sigma'$ to $\sigma$, the total size of the large items is reduced by $S(\mc{L}') - S(\mc{L})$
and the size of each medium or tiny item does not increase.
It follows that $K \le K' - (S(\mc{L}') - S(\mc{L}))$.
Since the load of a large-item bin is more than $G$, 
we conclude that the energy consumed by the large-item bins in $\sigma$ is 
\[
\beta K - \beta G \cdot b_L(\sigma) \le \beta K' - \beta G \cdot b_L(\sigma') - \beta (S(\mc{L}') - S(\mc{L})),
\]
where $\beta K' - \beta G \cdot b_L(\sigma')$ is the energy consumed by the large-item bins in $\sigma'$.
Clearly, for each heavy and light bin $j$ in $\sigma$, the load of $j$ is at most that in $\sigma'$.
These together show that $E(\sigma) \le E(\sigma')-\beta (S(\mc{L}') - S(\mc{L}))$.
The theorem is proved.
\end{proof}

As described at the end of Section~2, we have presented an APTAS for GBP and CGBP when $G \ge \frac \epsilon3$.
%By the above theorem, the objective value of $\sigma$ is at most $(1+\epsilon)b(\sigma^*)+E(\sigma^*)+1$ if we consider the GBP
%and thus, we have designed an APTAS for GBP when $G \ge \frac \epsilon3$.
%Note that our APTAS is also an APTAS for CGBP because $E(\sigma) \le E(\sigma^*) \le U$ where $U$ is a given upper bound on the energy consumption.

\section{An APTAS for GBP and CGBP when $G < \frac \epsilon3$}

In this section, we always let $\Delta = G$ and thus, by Definition~\ref{def02}, each large or medium item has a size more than $G$.
We distinguish the following three cases: (1) In $\sigma^*$, there is a light bin. 
In other words, $b_\ell(\sigma^*) \ne 0$;
(2) In $\sigma^*$, there is no light bin and the total size of the tiny items is relatively large.
In detail, $b_\ell(\sigma^*) = 0$ and $S(\mc{T}) \ge 2G (b_h(\sigma_1) - |\mc{M}|)$;
(3) $b_\ell(\sigma^*) = 0$ and $S(\mc{T}) < 2G (b_h(\sigma_1) - |\mc{M}|)$.

\subsection{$b_\ell(\sigma^*) \ne 0$}

In this subsection, we assume $b_\ell(\sigma^*) \ne 0$.
In other words, there exists a light bin in $\sigma^*$.
Recall that the packing $\sigma^*$ achieves the minimum energy consumption among all optimal packings.
The following lemma shows that each large and medium item should be packed alone into a bin in $\sigma^*$.

\begin{lemma}
\label{lemma04}
Each large and medium item must be packed alone into a bin in $\sigma^*$.
\end{lemma}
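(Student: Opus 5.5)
We argue by contradiction with an exchange argument. Since $\sigma^*$ is optimal and, among optimal packings, has minimum energy consumption, it suffices to construct from any violation a packing with no more bins and strictly less energy, or with fewer bins and no more energy. Recall that in this section $\Delta = G$, so every large or medium item has size $> G$ and every tiny item has size $\le G$. By assumption there is a light bin $\ell$ in $\sigma^*$, whose load $y$ satisfies $y < G$.

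Suppose some large or medium item $a$ (with $s_a > G$) shares a bin $j$ with other items, and let $x$ be the load of $j$. Then $x > s_a > G$, so bin $j$ consumes energy $\beta(x-G)$. Let $R = x - s_a > 0$ be the total size of the other items in $j$. We move items out of $j$ into the light bin $\ell$ and do not open any new bin. Concretely, we take the items of $j$ other than $a$ one at a time and move each into $\ell$ while the load of $\ell$ stays at most $G$. This is feasible in capacity because $G \le 1$.

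Case 1: at least one item can be moved. Suppose we move a set of total size $t > 0$. Bin $j$ has load $x - t \ge s_a > G$ before and after, so its energy drops by exactly $\beta t$. Bin $\ell$ remains at load at most $G$ and consumes no energy. The number of bins is unchanged, so if $\beta > 0$ the energy strictly decreases, contradicting the choice of $\sigma^*$.

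Case 2: no item fits, i.e.\ every other item $i$ in $j$ has $s_i > G - y$. Here we swap instead. We move $a$ alone into a new role: exchange the contents of $\ell$ with the other items of $j$, so that $a$ sits with the old contents of $\ell$ and the rest of $j$ sits alone. This may not help directly, so the fallback is to merge. The load of $\ell$ is $y < G$, and we move all of $\ell$ into a bin $j'$, possibly $j$ itself. This reduces the number of bins by one and increases energy by at most $\beta y$. That trade is not obviously favorable, so we prefer a cleaner route. Move a single item $i \ne a$ from $j$ into $\ell$. The new load of $\ell$ is $y + s_i \le G + s_i$, and its energy is at most $\beta(y + s_i - G)$. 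The energy of $j$ drops by $\beta s_i$, since $j$ stays above $G$. The net change is $\beta(y - G) < 0$ when $\beta > 0$, a strict decrease. So actually no case split is needed: moving any single other item $i$ from $j$ to $\ell$ always strictly decreases energy, because $\ell$'s overshoot is at most $y + s_i - G < s_i$. Capacity holds as long as $y + s_i \le 1$. If instead $y + s_i > 1$, we move the whole content of $\ell$ into $j$'s place and $i$ into $\ell$, i.e.\ swap $i$ with the contents of $\ell$. Bin $j$ then has load $x - s_i + y < x$ but still above $G$, and bin $\ell$ has load $s_i$. The energy change is $\beta(y - s_i) + \max\{0, \beta(s_i - G)\} \le \beta(y - G) < 0$, and we must check that $x - s_i + y \le 1$. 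That holds since $y < 1 - \ldots$; verifying it is the delicate step.

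Degenerate cases. If $\beta = 0$, all packings have zero energy and the minimality gives no leverage. In that case GBP/CGBP is classical BP and the lemma is presumably used only when $\beta > 0$; we would handle $\beta = 0$ separately or note that the lemma is then assumed without loss of generality. The main obstacle is the feasibility of the swap when $y + s_i > 1$. My first attempt there is to choose $i$ as the smallest non-$a$ item and use $s_a > G > y$ to bound $x - s_i + y < x - s_i + s_a$. Failing that, I would put the contents of $\ell$ together with $a$'s companions in bin $\ell$ and leave $a$ alone in $j$, checking capacity via $y + R$ versus $1$.
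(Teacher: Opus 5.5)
\textbf{Verdict: the proof is incomplete.} You eventually settle on the paper's exchange: move a single companion $i \neq a$ of the large or medium item $a$ from bin $j$ into the light bin $\ell$. Your energy accounting for this move is right. Bin $j$ still contains $a$ with $s_a > G$, so its energy drops by exactly $\beta s_i$. Bin $\ell$'s energy rises by $\max\{0, \beta(y+s_i-G)\}$, which is strictly less than $\beta s_i$ because $y < G$ and $\beta > 0$. (Write it with the $\max$; your bound $\beta(y+s_i-G)$ is negative when $y+s_i<G$.)

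The gap is feasibility. You call the case $y + s_i > 1$ ``the main obstacle'' and ``the delicate step''. You then leave it open with unverified fallbacks: swapping $i$ with the contents of $\ell$, merging $\ell$ into another bin, or regrouping $\ell$'s contents with $a$'s companions. That case never occurs. Items $a$ and $i$ share a unit-capacity bin, so $s_i \le 1 - s_a < 1 - G$. Bin $\ell$ is light, so $y < G$. Together these give $y + s_i < 1$, so the single move is always feasible. This is exactly how the paper justifies the move: it records $s_k < 1-G$ before sending $k$ to a light bin. With this bound, your Case~1/Case~2 split and all the swap and merge discussion can be dropped, and the argument reduces to the paper's.

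Your remark about $\beta = 0$ is well taken. The strict decrease needs $\beta > 0$, and the paper's phrase ``strictly less than $\beta s_k$'' silently assumes it. For $\beta = 0$ the lemma can actually fail. For example, take $G = 0.1$ and items $0.5, 0.5, 0.05$. The optimal packing $\{0.5, 0.5\}, \{0.05\}$ has zero energy and a light bin, yet a large item shares a bin. That case is just classical bin packing, so it is harmless for the algorithm. The only real repair your proof needs is the capacity bound above.
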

\begin{proof}
Let $j$ be a bin that contains a large or medium item $i$ in $\sigma^*$.
Assume the lemma does not hold and there exists another item, say $k \ne i$, placed in the bin $j$ in $\sigma^*$.
By $s_i > G$, we know $s_k < 1-G$ and moving the item $k$ to a light bin forms a new bin packing.
Since the load of a light bin is less than $G$ and $s_i > G$ again, removing the item $k$ from the bin $j$ will reduce the energy by $\beta s_k$
and adding it to a light bin will incur an energy strictly less than $\beta s_k$.
Therefore, the new packing has the same number of bins as in $\sigma^*$ but the energy consumption is strictly less than $E(\sigma^*)$,
which contradicts the choice of $\sigma^*$.
The lemma is proved.
\end{proof}

By Lemma~\ref{lemma04}, we can easily place a large and medium item into a bin as in $\sigma^*$.
In the sequel, we remove all the large and medium items from the instance $\mc{I}$ and further remove their corresponding bins in $\sigma^*$.
In other words, we assume each item in $\mc{I}$ is tiny.

\begin{lemma}
\label{lemma05}
The load of each bin in $\sigma^*$ is at most $2G$.
\end{lemma}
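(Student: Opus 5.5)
We argue by contradiction, exploiting the choice of $\sigma^*$ as the optimal packing of minimum energy, in the same spirit as the proof of Lemma~\ref{lemma04}. After the reduction, every item is tiny, so $s_i \le \Delta = G$. We are in the case $b_\ell(\sigma^*) \ne 0$. Strictly speaking, the light bins of $\sigma^*$ survive the removal of the large and medium items and their bins, because those removed bins have load more than $G$. So there is a light bin $\ell$ with load $y < G$.

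Suppose some bin $j$ in $\sigma^*$ has load $x > 2G$. Pick any item $k$ in bin $j$ and move it to bin $\ell$.

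\textbf{Feasibility.} The new load of $\ell$ is $y + s_k < G + G = 2G \le$ \ldots. If $2G \le 1$ the move fits. Otherwise $x > 2G > 1$ is impossible, so the assumption already fails. In either case feasibility is fine: if $x > 2G$ then $2G < 1$, hence $y + s_k < 2G < 1$.

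\textbf{Energy.} The number of bins is unchanged. Bin $j$ has load $x - s_k \ge x - G > G$ after the move, so it stays above the green threshold throughout. Its energy therefore drops by exactly $\beta s_k$. Bin $\ell$ gains energy $\max\{0, \beta(y + s_k - G)\}$, which is strictly less than $\beta s_k$ since $y < G$ and $s_k > 0$. This strictness needs $\beta > 0$. If $\beta = 0$, all energies are zero and we instead use a tie-breaking argument; see below.

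\textbf{Conclusion.} For $\beta > 0$ the new packing is optimal with strictly smaller energy, contradicting the choice of $\sigma^*$. Hence every bin has load at most $2G$.

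The main obstacle is the degenerate case $\beta = 0$ (or $G = 0$), where the energy does not strictly decrease. If $G = 0$ there are no tiny items, since sizes are positive, so the statement is vacuous. If $\beta = 0$ the problem is classic BP and the energy tie-break gives nothing. In that case I would either treat $\beta = 0$ separately, since it reduces to BP, or add a secondary tie-break to the choice of $\sigma^*$, such as lexicographically minimizing loads. I expect the intended proof simply assumes $\beta > 0$, so I would remark that $\beta = 0$ is handled by the BP algorithms.
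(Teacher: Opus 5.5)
Your proof is correct and follows the paper's argument: you move a tiny item from a bin of load more than $2G$, which stays above $G$, into a light bin, and this strictly lowers the energy, contradicting the choice of $\sigma^*$. Your explicit feasibility check ($y+s_k<2G<x\le 1$) and your remark that the strict decrease needs $\beta>0$ are details the paper leaves implicit; for $\beta=0$ the problem is classic bin packing, and the lemma can fail for an arbitrary optimum unless a secondary tie-break is added.
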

\begin{proof}
Assume the contrary and there exists a bin $j$ such that the load is more than $2G$ in $\sigma^*$.
Let $i$ be a tiny item placed in the bin $j$ in $\sigma^*$.
Note that $s_i \le G$.
After removing the item $i$ from $j$, the load of $j$ is more than $G$.
Similarly to Lemma~\ref{lemma04}, moving the item $i$ from $j$ to a light bin will reduce the energy consumption, 
contradicting the choice of $\sigma^*$.
The proof is completed.
\end{proof}

By the above lemma, we know that the load of each bin in $\sigma^*$ is at most $2G$.
Therefore, we can construct a scaled instance $\mc{I}'$ from $\mc{I}$: 
For each item $i$, we set $s'_i = \frac {s_i}{2G}$.
Moreover, let $G' = \frac 12$, $\beta' = 2G \beta$ and $U' = U-\beta \sum_{i \in \mc{L} \cup \mc{M}} (s_i-G)$ for CGBP.
One sees from Definition~\ref{def01} that if $x \le 2G$, then $f_{\beta, G}(x) = f_{\beta', G'}(x')$ where $x' = \frac x{2G}$.
Thus, it suffices to consider the scaled instance $\mc{I}'$.
Note that by $\epsilon \le 1$, $G' = \frac 12 \ge \frac \epsilon3$.
So, the APTAS presented in Section~3 is applicable for $\mc{I}'$.

Now, it suffices to assume that there is no light bin in $\sigma^*$, i.e., $b_\ell(\sigma^*) = 0$.
Therefore, the load of each bin in $\sigma^*$ is at least $G$.
It follows that the energy consumed by $\sigma^*$ is $\beta (S(\mc{I}) - G \cdot b(\sigma^*))$,
where $b(\sigma^*)$ is the number of bins used in $\sigma^*$ and $S(\mc{I})$ is the total size of the items in $\mc{I}$.
That is, 
\begin{equation}
\label{eq04}
E(\sigma^*) = \beta (S(\mc{I}) - G \cdot b(\sigma^*)).
\end{equation}
By $G < \frac \epsilon3$ and Definition~\ref{def02}, an item $i$ is large if $s_i > \frac \epsilon3$.
As described in Section~3.1, we can construct the set of large items $\mc{L}'$,
in which the number distinct sizes is at most $\frac {24}{\epsilon^2}+1$.
Moreover, there exists a packing $\sigma_1$ for $\mc{L}' \cup \mc{M} \cup \mc{T}$ satisfying the two statements in Lemma~\ref{lemma01}.
Similarly to Definition~\ref{def04}, we define the large-item bin types in $\sigma_1$
that describe the set of large items in the large-item bins.
Subsequently, we define the large-item bin configurations, which specify the number of bins for each large-item bin type.

\begin{definition}
\label{def05}
Let $s'_1, \ldots, s'_d$ be the distinct sizes in $\mc{L}'$ where $d \le \frac {24}{\epsilon^2}+1$.
A {\em large-item bin type} is a $d$-tuple $(n_1, \ldots, n_d)$ where each entry $n_j$ indicates the number of items packed into the bin with a size of $s'_j$ in $\sigma_1$.
Since the size of a large item is greater than $\frac \epsilon3$, we have $n_j \in \{ 0, \ldots, \frac 3\epsilon \}$.
Thus, the number of distinct large-item bin types is at most $(\frac 3\epsilon+1)^{d+1}$, which is a fixed constant.

By the first statement of Lemma~\ref{lemma01} and $\epsilon \le 1$, we have $b_L(\sigma_1) \le b(\sigma_1) \le 2 b(\sigma^*) \le 2n$.
A {\em large-item bin configuration} describes, for each large-item bin type, the number of bins used in $\sigma_1$,
where the number is in $\{ 0, \ldots, 2n \}$.
Therefore, the number of distinct large-item bin configurations is at most $(2n+1)^{(\frac 3\epsilon+1)^{d+1}}$, which is polynomial in $n$.
\end{definition}

One clearly sees that we can enumerate all large-item bin configurations and find the one corresponding to $\sigma_1$.
Therefore, we can easily determine the assignments of all large items in $\mc{L}'$.
After replacing back the original size of each large item, we have designed a packing for large items in $\mc{L}$.
Obviously, we still have $b_L(\sigma_1)$ large-item bins.
Since $b_h(\sigma_1) \le 2b(\sigma^*) \le 2n$, the value of $b_h(\sigma_1)$ can be guessed in $O(n)$ time.
In the next two subsections, we assume $b_h(\sigma_1)$ is known to us.

\subsection{$b_\ell(\sigma^*) = 0$ and $S(\mc{T}) \ge 2G (b_h(\sigma_1) - |\mc{M}|)$}

Note that $b_h(\sigma_1)$ is known and we have completed the assignments of the large items.
Moreover, each medium (respectively, tiny) item has a size more than (respectively, at most) $G$.
We first open $b_h(\sigma_1)$ new bins.
Then, we arbitrarily choose $\min \{ b_h(\sigma_1), |\mc{M}| \}$ medium items 
and place each of them alone into a bin.
If $b_h(\sigma_1) > |\mc{M}|$, then for each of the remaining $b_h(\sigma_1)-|\mc{M}|$ bins, 
we greedily fit a tiny item into a bin until the load first exceeds $G$.
So, by $G < \frac \epsilon3 \le \frac 13$, the load is at most $2G < \frac 23$.
By $S(\mc{T}) \ge 2G (b_h(\sigma_1) - |\mc{M}|)$, we can guarantee that 
the load of each newly opened $b_h(\sigma_1)$ bins is more than $G$.
Currently, we have $b_L(\sigma_1)+b_h(\sigma_1)$ bins and each of them has load at least $G$.
If any medium or tiny items remain unassigned, then we fit them into existing bins when possible;
otherwise, a new bin is created until all items are packed.
This way, we obtain a packing $\sigma$ for the input instance $\mc{I}$.

\begin{theorem}
\label{thm02}
The packing $\sigma$ for $\mc{I}$ satisfies that
\begin{itemize}
\item[1.] $b(\sigma) \le (1+\epsilon) b(\sigma^*) + 1$ and;

\item[2.] the total energy consumption of $\sigma$ is at most $E(\sigma^*)$.
\end{itemize}
\end{theorem}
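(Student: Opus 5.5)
The plan is to prove the two statements separately. In both, the key fact is that the first $b_L(\sigma_1)+b_h(\sigma_1)$ bins of $\sigma$ all have load at least $G$, and that $b_L(\sigma_1)+b_h(\sigma_1)\ge b(\sigma^*)$.

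\textbf{Step 1: the initial bins.} First I check that each of the initial $b_L(\sigma_1)+b_h(\sigma_1)$ bins is feasible and has load at least $G$.
\begin{itemize}
\item A large-item bin contains, after restoring original sizes, a large item of size greater than $\frac\epsilon3>G$.
\item A new bin holding a single medium item has load greater than $G=\Delta$.
\item A bin filled greedily with tiny items is filled until its load first exceeds $G$. The hypothesis $S(\mc{T})\ge 2G(b_h(\sigma_1)-|\mc{M}|)$ provides enough tiny mass to do this for all $b_h(\sigma_1)-|\mc{M}|$ such bins, since each absorbs at most $2G$.
\item Feasibility of the tiny-item bins holds since their load is at most $2G<1$.
\end{itemize}

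\textbf{Step 2: the energy bound.} I write the energy as $E(\sigma)=\beta\bigl(S(\mc{I})-\sum_j \min\{x_j,G\}\bigr)$, where $x_j$ is the load of bin $j$. This identity holds for any packing. Dropping the bins beyond the first $b_L(\sigma_1)+b_h(\sigma_1)$ only decreases the subtracted sum, so
\[
E(\sigma)\le \beta\bigl(S(\mc{I})-G(b_L(\sigma_1)+b_h(\sigma_1))\bigr).
\]
By the first statement of Lemma~\ref{lemma01}, $b_L(\sigma_1)+b_h(\sigma_1)\ge b_L(\sigma^*)+b_h(\sigma^*)$. Since $b_\ell(\sigma^*)=0$, the right-hand side equals $b(\sigma^*)$. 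Comparing with Eq.~(\ref{eq04}) then gives $E(\sigma)\le E(\sigma^*)$.

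\textbf{Step 3: the bin count.} For the first statement I split into two cases according to whether the final fitting phase opens any new bins.
\begin{itemize}
\item If it opens none, then $b(\sigma)=b_L(\sigma_1)+b_h(\sigma_1)\le b(\sigma_1)\le(1+\frac\epsilon4)b(\sigma^*)$ by Lemma~\ref{lemma01}.
\item Otherwise, consider the moment the last new bin is opened for some medium or tiny item $i$. At that moment every other bin has load greater than $1-s_i$. Since medium and tiny items have size at most $\frac\epsilon3$, every bin other than the last one has load greater than $1-\frac\epsilon3$. Hence
\[
b(\sigma)\le \frac{S(\mc{I})}{1-\epsilon/3}+1\le (1+\epsilon)b(\sigma^*)+1,
\]
using $S(\mc{I})\le b(\sigma^*)$ and $\frac{1}{1-\epsilon/3}\le 1+\epsilon$ for $\epsilon\le1$.
\end{itemize}

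\textbf{Main obstacle.} The delicate point is the second case of Step 3. The argument only works if, when a new bin is opened, every earlier bin is nearly full. This requires the fitting rule to try all existing bins, first-fit style, before opening a new one; the paper's wording ``fit them into existing bins when possible'' supports this reading. I would also double-check that the bound holds regardless of the order in which the remaining items are processed.
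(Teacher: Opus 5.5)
Your proposal is correct and follows essentially the same route as the paper: the same two-case bin count (either no new bins are opened, or every bin except the last has load above $1-\frac\epsilon3$), and the same energy bound using the at least $b_L(\sigma_1)+b_h(\sigma_1)\ge b(\sigma^*)$ bins of load at least $G$, together with Lemma~\ref{lemma01} and Eq.~(\ref{eq04}). Your identity $E(\sigma)=\beta(S(\mc{I})-\sum_j\min\{x_j,G\})$ just makes explicit a step the paper states in one line. Your order-independence worry is already settled by your own argument, which only uses the moment the last bin is opened.
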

\begin{proof}
We first prove $b(\sigma) \le (1+\epsilon) b(\sigma^*) + 1$.
If $b(\sigma) \le b_L(\sigma_1) + b_h(\sigma_1)$, then by the first statement of Lemma~\ref{lemma01}, 
$b(\sigma) \le b(\sigma_1) \le (1+\frac \epsilon4) b(\sigma^*)$ and we are done.
Next, we assume $b(\sigma) \ge b_L(\sigma_1) + b_h(\sigma_1) + 1$.
By Definition~\ref{def02} and $G < \frac \epsilon3$, the size of a medium or tiny item is at most $\frac \epsilon3$.
So, the last bin is created because the load of each existing bin is at least $1-\frac \epsilon3$; 
otherwise, the bin can add at least one more medium or tiny item.
It follows from $\epsilon \le 1$ and $S(\mc{I}) \le b(\sigma^*)$ that 
\[
b(\sigma) \le \frac {S(\mc{I})}{1-\frac \epsilon3} + 1 \le (1+\epsilon) S(\mc{I}) + 1
\le (1+\epsilon) b(\sigma^*) + 1.
\]

Note that the total number of large-item and heavy bins is at least $b_L(\sigma_1)+b_h(\sigma_1)$.
Therefore, the energy consumed by $\sigma$ is at most $\beta (S(\mc{I}) - G (b_L(\sigma_1)+b_h(\sigma_1)))$.
By the first statement of Lemma~\ref{lemma01}, Eq.(\ref{eq04}) and $b_\ell(\sigma^*) = 0$, we have
\[
\beta (S(\mc{I}) - G (b_L(\sigma_1)+b_h(\sigma_1))) \le \beta (S(\mc{I}) - G (b_L(\sigma^*)+b_h(\sigma^*))) 
= \beta (S(\mc{I}) - G \cdot b(\sigma^*)) = E(\sigma^*).
\]
The proof is finished.
\end{proof}

\subsection{$b_\ell(\sigma^*) = 0$ and $S(\mc{T}) < 2G (b_h(\sigma_1) - |\mc{M}|)$}

The final case is $b_\ell(\sigma^*) = 0$ and $S(\mc{T}) < 2G (b_h(\sigma_1) - |\mc{M}|)$.
We pack each medium item alone into a new bin.
So far, we have determined the assignments for all large and medium items.
Next, we consider the tiny items in $\mc{T}$.

By Definition~\ref{def03}, an item packed into a heavy bin in $\sigma_1$ is either medium or tiny.
Since $S(\mc{T}) < 2G (b_h(\sigma_1) - |\mc{M}|)$, we know $b_h(\sigma_1) > |\mc{M}|$ and
there exist at least $b_h(\sigma_1)-|\mc{M}|$ heavy bins in $\sigma_1$, each of which only contains tiny items.
Let $\mc{B}$ be a collection of $b_h(\sigma_1)-|\mc{M}|$ such heavy bins.
So, $|\mc{B}| = b_h(\sigma_1)-|\mc{M}|$.
For each bin $j \in \mc{B}$, its load is at least $G$ and we can greedily choose the tiny items in $j$ until the total size first reaches or exceeds $G$.
We abuse the notation $\mc{T}_j \subseteq \mc{T}$ to denote the chosen tiny items from $j$.
Since a tiny item has a size at most $G$, we have $S(\mc{T}_j) \le [G, 2G)$.

\begin{lemma}
\label{lemma06}
There exists a packing $\sigma_3$ for $\mc{T}$ such that 
\begin{itemize}
\item[1.] $b(\sigma_3) = b_h(\sigma_1) - |\mc{M}|$ and the load of each bin in $\sigma_3$ is at most $3G$;

\item[2.] $E(\sigma_3) = \beta (S(\mc{T}) - G (b_h(\sigma_1) - |\mc{M}|))$.
\end{itemize}
\end{lemma}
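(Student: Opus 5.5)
We construct $\sigma_3$ directly from the sets $\mc{T}_j$, $j \in \mc{B}$. We open exactly $|\mc{B}| = b_h(\sigma_1) - |\mc{M}|$ bins, one for each $j \in \mc{B}$, and initially place $\mc{T}_j$ into its bin. Since the $\mc{T}_j$ come from distinct bins of $\sigma_1$, they are pairwise disjoint. Each such bin then has load in $[G, 2G)$. The remaining tiny items, $\mc{T}_{\mathrm{rest}} = \mc{T} \setminus \cup_{j \in \mc{B}} \mc{T}_j$, still have to be distributed among these same bins without opening any new one. We do this greedily: place the next remaining item into a bin whose current load is less than $2G$. Such an item has size at most $G$, so afterwards that bin's load is less than $3G$. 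This gives the bound ``load at most $3G$'' in the first statement, and the bin count is $|\mc{B}|$ by construction.

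The key step is showing that the greedy never gets stuck, i.e.\ a bin with load below $2G$ always exists while items remain. Suppose every bin had load at least $2G$ while some item was still unpacked. Then
\[
S(\mc{T}) > 2G\,|\mc{B}| = 2G\,(b_h(\sigma_1) - |\mc{M}|),
\]
which contradicts the case assumption $S(\mc{T}) < 2G (b_h(\sigma_1) - |\mc{M}|)$. Hence all of $\mc{T}$ is packed into the $|\mc{B}|$ bins. Since $3G < 1$ by $G < \frac \epsilon3 \le \frac13$, every bin respects the unit capacity, so the packing is feasible.

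For the second statement, every bin of $\sigma_3$ has load at least $G$, because it contains $\mc{T}_j$ with $S(\mc{T}_j) \ge G$. Hence each bin consumes exactly $\beta(\text{load} - G)$, and summing over bins gives
\[
E(\sigma_3) = \beta\bigl(S(\mc{T}) - G\,(b_h(\sigma_1) - |\mc{M}|)\bigr).
\]

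The main point to verify carefully is the existence of $\mc{B}$ and the disjointness and lower bound $G$ of the $\mc{T}_j$. Both follow from the preceding paragraph of the paper: the bins in $\mc{B}$ are heavy bins of $\sigma_1$ that contain only tiny items. Each has load at least $G$, so the greedy choice inside it reaches $G$. Everything else is the counting argument above.
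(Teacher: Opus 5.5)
Your proof is correct and takes essentially the same route as the paper. You seed one bin per $j \in \mc{B}$ with $\mc{T}_j$, and you distribute the remaining tiny items only to bins whose load is below $2G$; the paper instead fills the bins one after another past $2G$, which makes no difference. You then use $S(\mc{T}) < 2G(b_h(\sigma_1)-|\mc{M}|)$ to show the procedure never runs out of room, $3G<1$ for feasibility, and the fact that every bin is heavy for the energy formula.
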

\begin{proof}
We construct the packing $\sigma_3$ as follows:
We open $b_h(\sigma_1) - |\mc{M}|$ bins.
By $|\mc{B}| = b_h(\sigma_1)-|\mc{M}|$, we can establish a bijective mapping $\phi$ from the bins in $\mc{B}$ to our opened bins.
Subsequently, we fit the set $\mc{T}_j$ for each $j \in \mc{B}$ to the bin $\phi(j)$.
So, the load of each bin is $[G, 2G)$.
For the unassigned items in $\mc{T} \setminus \cup_{j \in \mc{B}} \mc{T}_j$, we fit an item into the first bin until the load first exceeds $2G$
and then, repeatedly do this operation for all the bins.
Since a tiny item has a size at most $G$ and $G < \frac \epsilon3 \le \frac 13$, the load is less than $3G < 1$.
By $S(\mc{T}) < 2G (b_h(\sigma_1) - |\mc{M}|)$, we can pack all unassigned items in $\mc{T} \setminus \cup_{j \in \mc{B}} \mc{T}_j$ into the opened bins.

By the construction, the first statement is clearly true.
The second statement holds because each bin in $\sigma_3$ is heavy (its load is at least $G$).
\end{proof}

We perform a similar scaling operation as in Section~4.1, for each item tiny $i \in \mc{T}$, 
the scaled size is $s'_i = \frac {s_i}{3G}$.
Then, we let $G' = \frac 13$, $\beta' = 3G \beta$ and $\epsilon' = \frac \epsilon2$.
For the CGBP problem, the upper bound $U'$ on the energy consumption is set to 
$\beta (S(\mc{T}) - G (b_h(\sigma_1) - |\mc{M}|))$.
By $G' \ge \frac 13 \ge \frac {\epsilon'}3$, the APTAS presented in Section~3 is applicable for the tiny items with scaled sizes,
which outputs a bin packing, denoted as $\sigma_4$ for $\mc{T}$ after replacing back the original sizes.
By $\epsilon' = \frac \epsilon2$, we can guarantee that $b(\sigma_4) \le (1+\frac \epsilon2) b(\sigma_3)+1$ and $E(\sigma_4) \le E(\sigma_3)$.
Combining the packings for $\mc{L}, \mc{M}$ and $\mc{T}$ yields a packing $\sigma$ for the input instance $\mc{I}$.

\begin{theorem}
\label{thm03}
The packing $\sigma$ satisfies that 
\begin{itemize}
\item[1.] $b(\sigma) \le (1+\epsilon) b(\sigma^*) + 1$ and;

\item[2.] the total energy consumption of $\sigma$ is at most $E(\sigma^*)$.
\end{itemize}
\end{theorem}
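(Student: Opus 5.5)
The packing $\sigma$ consists of three parts, and I will bound each part's bin count and energy separately.

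\emph{Large items.} They are packed according to the large-item bin configuration of $\sigma_1$, which gives $b_L(\sigma_1)$ bins. After the original sizes are restored, every such bin still contains a large item, so its load exceeds $G$.

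\emph{Medium items.} There are $|\mc{M}|$ singleton bins, each with load above $G$.

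\emph{Tiny items.} They are packed by $\sigma_4$, with $b(\sigma_4) \le (1+\frac\epsilon2) b(\sigma_3)+1 = (1+\frac\epsilon2)(b_h(\sigma_1)-|\mc{M}|)+1$ and $E(\sigma_4)\le E(\sigma_3)$. One point needs care. Section~3 is applied to the scaled instance with capacity $1$, which corresponds to original capacity $3G<1$, so every bin of $\sigma_4$ is feasible. Moreover, since the energy functions agree under scaling for loads up to $3G$, $E(\sigma_4)$ measured with $(\beta,G)$ equals the scaled energy measured with $(\beta',G')$. Hence Theorem~\ref{thm01} for the scaled instance transfers directly. The competitor $\sigma_3$ has all loads at most $3G$, so it is a valid packing of the scaled instance, and its energy is as stated in Lemma~\ref{lemma06}.

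\textbf{Bin count.} Adding the three parts gives
$b(\sigma)\le b_L(\sigma_1)+|\mc{M}|+(1+\frac\epsilon2)(b_h(\sigma_1)-|\mc{M}|)+1 \le (1+\frac\epsilon2)(b_L(\sigma_1)+b_h(\sigma_1))+1$.
Because $b_\ell(\sigma^*)=0$, the light bins of $\sigma_1$ can only be the newly opened ones, and those are large-item bins. So $b_L(\sigma_1)+b_h(\sigma_1)\le b(\sigma_1)\le(1+\frac\epsilon4)b(\sigma^*)$ by Lemma~\ref{lemma01}. Since $\epsilon\le1$, $(1+\frac\epsilon2)(1+\frac\epsilon4)\le 1+\epsilon$, which yields statement 1.

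\textbf{Energy.} Each large-item bin and each medium bin has load above $G$, so their combined energy is $\beta(S(\mc{L})+S(\mc{M})-G(b_L(\sigma_1)+|\mc{M}|))$. The tiny part contributes at most $E(\sigma_3)=\beta(S(\mc{T})-G(b_h(\sigma_1)-|\mc{M}|))$. Summing,
$E(\sigma)\le \beta(S(\mc{I})-G(b_L(\sigma_1)+b_h(\sigma_1)))$.
By Lemma~\ref{lemma01}, $b_L(\sigma_1)\ge b_L(\sigma^*)$ and $b_h(\sigma_1)\ge b_h(\sigma^*)$. Together with $b_\ell(\sigma^*)=0$ this gives $b_L(\sigma_1)+b_h(\sigma_1)\ge b(\sigma^*)$, and so Eq.(\ref{eq04}) gives $E(\sigma)\le E(\sigma^*)$, which is statement 2.

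\textbf{Main obstacle.} The delicate step is certifying that the scaled-instance guarantee really holds. In the energy equality for the large-item bins, we must check that restoring the original sizes keeps every such load above $G$; this holds because each bin still contains a large item of size greater than $G$. We must also check that $\sigma_3$ and $\sigma_4$ respect the $3G$ cap, so that scaling preserves both feasibility and energy exactly.
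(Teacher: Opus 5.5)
Your proof is correct and follows the paper's argument essentially line for line: the same three-part bin count collapsing to $(1+\frac\epsilon2)(b_L(\sigma_1)+b_h(\sigma_1))+1$, and the same energy sum $\beta(S(\mc{I})-G(b_L(\sigma_1)+b_h(\sigma_1)))$, compared with Eq.~(\ref{eq04}) via Lemma~\ref{lemma01}. One small correction: $b_L(\sigma_1)+b_h(\sigma_1)\le b(\sigma_1)$ follows immediately from $b(\sigma_1)=b_L(\sigma_1)+b_h(\sigma_1)+b_\ell(\sigma_1)$, and your light-bin justification is inaccurate, because $\sigma_1$ can contain light bins (bins of $\sigma^*$ whose only large items lie in $\mc{L}_\theta$ lose them under the remapping $\phi$), so you should simply drop that sentence.
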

\begin{proof}
Recall that $b(\sigma_4) \le (1+\frac \epsilon2) b(\sigma_3)+1$ and $E(\sigma_4) \le E(\sigma_3)$.
So, the total number of bins used in $\sigma$ is at most $b_L(\sigma_1)+|\mc{M}|+(1+\frac \epsilon2)b(\sigma_3)+1$.
By $\epsilon \le 1$, Lemmas~\ref{lemma01} and~\ref{lemma06}, we have 
\begin{eqnarray*}
b_L(\sigma_1)+|\mc{M}|+(1+\frac \epsilon2)b(\sigma_3)+1 & \le & (1+\frac \epsilon2)(b_L(\sigma_1)+b_h(\sigma_1))+1 \\
& \le & (1+\frac \epsilon2) b(\sigma_1) + 1 \\
& \le & (1+\frac \epsilon2)(1+\frac \epsilon4) b(\sigma^*)+1 \\
& \le & (1+\epsilon) b(\sigma^*) + 1.
\end{eqnarray*}
The first statement is proved.
Clearly, the energy consumed by large-item bins (respectively, heavy bins) is $\beta (S(\mc{L})-G \cdot b_L(\sigma_1))$
(respectively, $\beta (S(\mc{M})-G \cdot |\mc{M}|)$).
By $E(\sigma_4) \le E(\sigma_3)$ and the second statement of Lemma~\ref{lemma06}, the total energy consumption 
is at most $\beta (S(\mc{I}) - G(b_L(\sigma_1) + b_h(\sigma_1))$.
By the first statement of Lemma~\ref{lemma01}, $b_\ell(\sigma^*) = 0$ and Eq.(\ref{eq04}), we have 
\[
\beta (S(\mc{I}) - G(b_L(\sigma_1) + b_h(\sigma_1)) \le \beta (S(\mc{I}) - G(b_L(\sigma^*) + b_h(\sigma^*))
= \beta (S(\mc{I}) - G \cdot b(\sigma^*)) = E(\sigma^*).
\]
The lemma is proved.
\end{proof}

Combining all three subsections, we have presented an APTAS for GBP and CGBP when $G < \frac \epsilon3$.

\section{A $\frac 32$-approximation for GBP and CGBP}

In this section, we present an algorithm for GBP and CGBP, whose absolute approximation ratio is $\frac 32$.
We still let $\sigma^*$ be an optimal packing that achieves the minimum energy consumption among all optimal packings.
It suffices to compute a bin packing $\sigma$ such that $b(\sigma) \le \frac 32 b(\sigma^*)$ and $E(\sigma) \le E(\sigma^*)$.
When $b(\sigma^*) = 1$, the packing $\sigma^*$ can be obtained by placing the items into a single bin.
If $b(\sigma^*) \ge 3$, then we call our APTAS using $\epsilon=\frac 16$ as input, which outputs a packing $\sigma$
with $b(\sigma) \le \frac 76 b(\sigma^*) + 1$ and $E(\sigma) \le E(\sigma^*)$.
It follows from $\frac 13 b(\sigma^*) \ge 1$ that $b(\sigma) \le \frac 32 b(\sigma^*)$ and we are done.
Subsequently, we can assume $b(\sigma^*) = 2$ and clearly, $S(\mc{I}) \le 2$.

Different from Definition~\ref{def02}, we say an item $i$ is {\em large} if $s_i \ge \frac 13$; otherwise, it is {\em tiny}.
Similarly, different from Definition~\ref{def03}, we say a bin is {\em heavy} if its load is at least $G$; otherwise, it is {\em light}.
We let $\mc{L}$ and $\mc{T}$ denote the set of large and tiny items, respectively.
Recall that $b(\sigma^*) = 2$.
For $j \in \{ 1, 2 \}$, we further let $\mc{L}_j$ (respectively, $\mc{T}_j$) denote the set of large (respectively, tiny) items packed into the bin $j$.
By $S(\mc{I}) \le 2$, $|\mc{L}| \le 6$. 
Thus, the sets $\mc{L}_1$ and $\mc{L}_2$ can be determined in $2^6 = O(1)$ time.
If $|\mc{T}_2| \in \{ 0, 1 \}$, then $\mc{T}_2$ can be found in $O(n)$ time and so is $\sigma^*$.
Therefore, we can assume $|\mc{T}_2| \ge 2$.

The sets $\mc{L}_1$ and $\mc{L}_2$ are known.
Without loss of generality, we assume $S(\mc{L}_1) \ge S(\mc{L}_2)$.
If $S(\mc{T}_1) \ge S(\mc{T}_2)$, then $\mc{T}_1$ and $\mc{T}_2$ can be swapped, after which the load of each bin does not increase.
Since the function $f_{\beta, G}(x)$ in Definition~\ref{def01} is non-decreasing in $x$, the energy consumption does not increase.
Now, it suffices to assume $|\mc{T}_2| \ge 2$, $S(\mc{L}_1) \ge S(\mc{L}_2)$ and $S(\mc{T}_1) \le S(\mc{T}_2)$.

The first case is $S(\mc{T}) \ge 4G$.
By $S(\mc{T}_1) \le S(\mc{T}_2)$, we have $S(\mc{T}_2) \ge 2 G$ and 
clearly, the second bin is heavy in $\sigma^*$.
We claim that the first bin is heavy in $\sigma^*$ too.
Assume the contrary.
By $|\mc{T}_2| \ge 2$, we can let $i$ denote an item in $\mc{T}_2$ with the {\em minimum} size.
So, $s_i \le \min \{ \frac 13, \frac 12 S(\mc{T}_2) \}$.
Since $4G \le S(\mc{T}) \le S(\mc{I}) \le 2$, we have $G \le \frac 12$.
After moving the item $i$ to the first bin, the load of the first bin is at most $G+s_i \le \frac 12+\frac 13 < 1$.
Thus, this moving operation yields a new packing $\sigma_1$ for $\mc{I}$.
By $s_i \le \frac 12 S(\mc{T}_2)$ and $S(\mc{T}_2) \ge 2G$, we have $S(\mc{T}_2 \setminus \{ i \}) \ge G$.
Similarly to Lemma~\ref{lemma04}, removing the item $i$ from the second bin reduces the energy consumption by $\beta s_i$,
while adding it to the first bin increases the energy consumption by strictly less than $\beta s_i$.
Therefore, the energy consumed by $\sigma_1$ is strictly less than that by $\sigma^*$, contradicting the choice of $\sigma^*$.
In conclusion, the two bins in $\sigma^*$ are both heavy.
That is, for each $j \in \{ 1, 2 \}$, $S(\mc{L}_j \cup \mc{T}_j) \ge G$ and $E(\sigma^*) = \beta (S(\mc{I})-2G)$.

Next, we construct a set $\mc{T}'_1 \subseteq \mc{T}$ of tiny items in polynomial time
such that for each $j \in \{ 1, 2 \}$, $S(\mc{L}_j \cup \mc{T}'_j) \ge G$, where $\mc{T}'_2 = \mc{T} \setminus \mc{T}'_1$.
If there exists an item $i \in \mc{T}_1$ with $s_i \ge G$, then the item can be found in $O(n)$ time and we let $\mc{T}'_1 = \{ i \}$.
Since $\mc{T}_2 \subseteq \mc{T}'_2$ and $S(\mc{L}_2 \cup \mc{T}_2) \ge G$, we have $S(\mc{L}_2 \cup \mc{T}'_2) \ge G$.
Therefore, $\mc{T}'_1 = \{ i \}$ is as desired.
If there is no item $i \in \mc{T}_1$ with $s_i \ge G$, then since $S(\mc{L}_1 \cup \mc{T}_1) \ge G$,
$\mc{T}'_1$ can be constructed by greedily choosing the items in $\mc{T}$ until the total size first exceeds $\max \{ 0, G-S(\mc{L}_1) \}$.
One sees that $S(\mc{T}'_1) \le \max \{ G, 2G-S(\mc{L}_1) \} \le 2G$.
By $S(\mc{T}) \ge 4G$ and $\mc{T}'_2 = \mc{T} \setminus \mc{T}'_1$, we have $S(\mc{T}'_2) \ge 2G$.
So, the set $\mc{T}'_1$ is as desired too.
After we compute the sets $\mc{T}'_1$ and $\mc{T}'_2$, for each $j= \{ 1, 2 \}$, by $S(\mc{L}_j \cup \mc{T}'_j) \ge G$, 
we can repeatedly add the items in $\mc{T}'_j$ to the bin $j$ until its load first reaches or exceeds $G$.
Note that each item in $\mc{T}$ is tiny and thus, the size is less than $\frac 13$.
Therefore, by $G \le \frac 12$, the load of each bin is at least $G$ and at most $\max \{ S(\mc{L}_1), S(\mc{L}_2), G+\frac 13 \} \le 1$.
Finally, if there is some unassigned item, then we pack it into an existing bin when possible; otherwise, a new bin is opened until all items are packed.
Let $\sigma$ be the final packing for $\mc{I}$.
Clearly, $b(\sigma) \ge 2$ and the first two bins are heavy by the construction.
In the last paragraph, we prove $E(\sigma^*) = \beta (S(\mc{I})-2G)$.
Therefore, $E(\sigma) \le \beta (S(\mc{I})-2G) = E(\sigma^*)$.
If the third bin is opened, then each of the first two bins has load at least $\frac 23$.
By $S(\mc{I}) \le 2$, the third bin has load bounded by $\frac 23$ and thus, three bins are enough.
It follows that $b(\sigma) \le 3 = \frac 32 b(\sigma^*)$.
Until now, we have designed a $\frac 32$-approximation algorithm when $S(\mc{T}) \ge 4G$.

The remaining case is $S(\mc{T}) < 4G$.
We let $\hat{\mc{T}} = \{ i \in \mc{T}: s_i \le \frac 14 G \}$.
For each item $i \in \mc{T} \setminus \hat{\mc{T}}$, $s_i > \frac 14 G$ and thus, $|\mc{T} \setminus \hat{\mc{T}}| < 16$.
So, each item in $\mc{T} \setminus \hat{\mc{T}}$ can be assigned to the same bin as in $\sigma^*$ in $2^{16} = O(1)$ time.
For each $j \in \{ 1, 2 \}$, let $\hat{\mc{T}}_j$ be the set of items in $\hat{\mc{T}}$ assigned to the bin $j$ in $\sigma^*$
and let $k_j$ denote the maximum integer such that $S(\hat{\mc{T}}_j) \ge \frac {k_j}4 G$.
Clearly, $\frac {k_j}4 G \le S(\hat{\mc{T}}_j) < \frac {k_j + 1}4 G$.
By $S(\hat{\mc{T}}_j) \le S(\mc{T}) < 4G$, we have $k_j \in \{ 0, \ldots, 16 \}$.
So, the values of $k_1$ and $k_2$ can be determined in constant time.
We greedily fit the items of $\hat{\mc{T}}$ into the bin $j$ as long as their total size is no more than $\frac {k_j}4 G$.
Since $s_i \le \frac 14 G$ for each $i \in \hat{\mc{T}}$, the total size of the items in $\hat{\mc{T}}$ packed into the bin $j$ is in $(\frac {k_j-1}4 G, \frac {k_j}4 G]$.
By $S(\hat{\mc{T}}_j) < \frac {k_j + 1}4 G$, the items in $\hat{\mc{T}}$ that remain unassigned have a total size at most $G$.
So, we only need one more bin to pack those items without consuming any energy.
We still use $\sigma$ to denote the packing produced.
Obviously, $b(\sigma) \le 3 = \frac 32 b(\sigma^*)$ and by $S(\hat{\mc{T}}_j) \ge \frac {k_j}4 G$, for each of the first two bins in $\sigma$, its load does not exceed that in $\sigma^*$.
Since the function $f_{\beta, G}(x)$ in Definition~\ref{def01} is non-decreasing in $x$, we have $E(\sigma) \le E(\sigma^*)$.
In summary, we have designed a polynomial-time $\frac 32$-approximation algorithm for GBP and CGBP,
which is best possible since even the classic bin packing cannot be approximated within a factor of $\frac 32$ unless NP=P.

\section{Conclusions}
%==================================================================================================

In this paper, we study the GBP and CGBP problem where the values of $\beta$, $G$ and $U$ are all part of the input.
Our main contributions are an APTAS and a $\frac 32$-approximation algorithm for these two problems.

In~\cite{BSS25}, the authors proved that the absolute approximation ratio for GBP is at least $\frac {1.5+\beta (1-G)}{1+\beta (1-G)}$.
Note that our $\frac 32$-approximation algorithm achieves this lower bound when $\beta = 0$ or $G = 1$.
But it is still interesting to prove whether or not the lower bound is tight for some other values of $\beta$ and $G$.
Moreover, it would also be interesting to extend our model to non-uniform bin capacities
where the function $f_{\beta, G}$ in Definition~\ref{def01} may be defined in terms of the portion of occupied capacity in a bin.

\bibliography{main.bib}

\end{document}